\newtheorem{theorem}{Theorem}
\newtheorem{lemma}{Lemma}
\newtheorem{corollary}{Corollary}
\newtheorem{definition}{Definition}
\newcommand {\ebd} {\stackrel{\Delta} {=}}
\newcommand {\reals} {{\rm I\!R}}
\newcommand {\ba} {\mbox{\boldmath $a$}}
\newcommand {\bb} {\mbox{\boldmath $b$}}
\newcommand {\bu} {\mbox{\boldmath $u$}}
\newcommand {\bx} {\mbox{\boldmath $x$}}
\newcommand {\by} {\mbox{\boldmath $y$}}
\newcommand {\bz} {\mbox{\boldmath $z$}}
\newcommand {\bX} {\mbox{\boldmath $X$}}
\newcommand {\bY} {\mbox{\boldmath $Y$}}
\newcommand{\calA}{{\cal A}}
\newcommand{\calB}{{\cal B}}
\newcommand{\calC}{{\cal C}}
\newcommand{\calD}{{\cal D}}
\newcommand{\calW}{{\cal W}}
\newcommand{\calX}{{\cal X}}
\newcommand{\calY}{{\cal Y}}
\newcommand{\calZ}{{\cal Z}}
\newcommand {\dotle} {\stackrel{\cdot} {\le}}  
\newcommand {\ind} {\mathbbm{1}}
\newcommand {\sgn}{\textrm{sgn}}
\newcommand {\vol}{\textrm{Vol}}
\newcommand {\sumi} {\sum_{i=1}^n}
\newcommand {\tby} {\tilde{\by}}
\newcommand{\eqde}{\stackrel{\triangle}{=}}
\newcommand {\bs} {\boldsymbol}
\newcommand {\hrho} {\hat{\rho}}
\def\be{\begin{eqnarray}}
\def\ee{\end{eqnarray}}
\def\ben{\begin{eqnarray*}}
\def\een{\end{eqnarray*}}
\begin{document}
\title{Optimal Watermark Embedding and Detection Strategies Under Limited Detection Resources
\thanks{This research was supported by the Israel Science Foundation (grant no. 223/05).}}
\author{Neri Merhav and Erez Sabbag
}

\maketitle
\date

\begin{center}
Department of Electrical Engineering \\
Technion - Israel Institute of Technology \\
Technion City, Haifa 32000, Israel\\
{\tt \{merhav@ee, erezs@tx\}.technion.ac.il}
\end{center}
\vspace{1.6\baselineskip}
\setlength{\baselineskip}{1.4\baselineskip}
\begin{abstract}
An information--theoretic approach is proposed to watermark embedding and detection under limited detector resources.
First, we consider the attack-free scenario under which asymptotically optimal decision regions in the Neyman-Pearson sense are proposed, along with the optimal embedding rule. Later, we explore the case of zero-mean i.i.d.\ Gaussian covertext distribution with unknown variance under the attack-free scenario. For this case, we propose a lower bound on the exponential decay rate of the false-negative probability and prove that the optimal embedding and detecting strategy is superior to the customary linear, additive embedding strategy in the exponential sense. Finally, these results are extended to the case of memoryless attacks and general worst case attacks. Optimal decision regions and embedding rules are offered, and the worst attack channel is identified.
\end{abstract}

\section{Introduction}
\label{sec.Intro}

The field of information embedding and watermarking has become a very active field of
research in the last decade, both in the academic community and in the
industry, due to the need of protecting the vast amount of digital information
available over the Internet and other data storage media and devices (see,
e.g.,\cite{AndersonPetitcolas98}--\nocite{PAK99}\nocite{CoxMiller99}\cite{MoulinOs03}).
Watermarking (WM) is a form of embedding information secretly in a host data
set (e.g., image, audio signal, video, etc.). In this work, we raise and
examine certain fundamental questions with regard to customary methods of
embedding and detection and suggest some new ideas for the most basic setup.

Consider the system depicted in Fig.~1: Let $\bx=\big(x_1,\ldots,x_n\big)$ denote a covertext sequence emitted
from a memoryless source $P_X$, and let $\bu=\big(u_1,\ldots,u_n\big)$ denote a watermark sequence available at the embedder and at the detector.
Our work focuses on finding the optimal embedding and detection rules for the following binary hypothesis problem: under hypothesis $H_1$, the stegotext sequence $\by=\big(y_1,\ldots,y_n\big)$ is ``watermarked'' using the embedder $\by=f_n(\bx,\bu)$, while under $H_0$, $\by=\bx$, i.e, the stegotext sequence in not ``watermarked''.
An attack channel $W_n(\bz|\by)$, fed by the stegotext, produces a forgery $\bz$, which in turn, is observed by the detector. Now, given the forgery sequence $\bz$ and the watermark sequence $\bu$, the detector needs to decide whether the forgery is ``watermarked'' or not.
Performance is evaluated under the Neyman-Pearson criterion, namely, minimum false detection probability while the false alarm probability is kept lower than a prescribed level. The problem is addressed under different statistical
assumptions: the covertext distribution is known or unknown to the embedder/detector, the attack channel is known to be a memoryless attack or it is a general attack channel, and the watermark sequence is deterministic or random.

\begin{figure}[h!]
    \centering
    \psfrag{X}[][][1]{$\bx=\big(x_1,x_2,\ldots,x_n\big)$}
    \psfrag{U}[][][1]{$\bu=\big(u_1,u_2,\ldots,u_n\big)$}
    \psfrag{Y}[][][1]{$\by$}
    \psfrag{Z}[][][1]{$\bz$}
    \psfrag{W}[][][1]{$W_n(\bz|\by)$}
    \psfrag{f}[][][1]{$f_n(\bx,\bu)$}
    \psfrag{g}[][][1]{$ $}
    \psfrag{q}[][][1]{$\big\{H_0,H_1\big\}$}
    \psfrag{H0}[][][1]{$H_0$}
    \psfrag{H1}[][][1]{$H_1$}
    \includegraphics[width=6.5in]{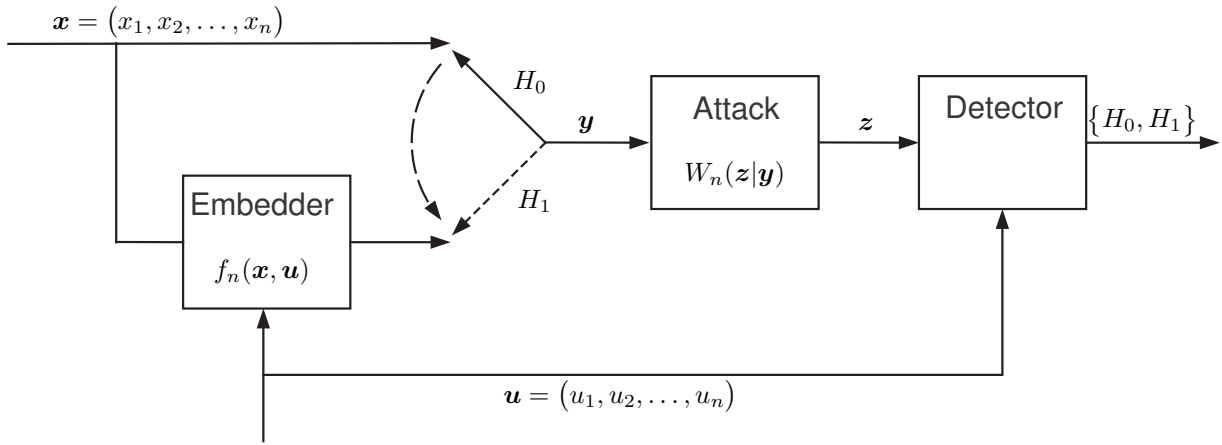}
    \label{fig_scheme}
    \caption{The watermarking and detection problem.}
\end{figure}

Surprisingly, this problem did not receive much attention in the
information theory community. In \cite{Merhav00b}, the problem of
universal detection of messages via finite state channel was considered, and
an optimal decision rule was proposed for deciding whether the observed
sequence is the product of an unknown finite-state channel fed by one of two predefined
sequences.
Liu and Moulin \cite{LiuMoulin03},\cite{LiuMoulin03b} explored the error exponent of two popular one-bit WM systems:
the spread-spectrum scheme and the quantized-index-modulation (QIM) watermarking scheme, under a general additive attack. Bounds and closed form expressions were offered for the error exponents. We note that the setting of \cite{LiuMoulin03} is different from ours: here, we are trying to find the best embedder given detection resource under Neyman-Pearson criterion of optimality, while in \cite{LiuMoulin03}, the performance (the error exponent) of a given embedding schemes and a given source distribution are evaluated under additive attacks.
In \cite{Merhav05Wacha}, the problem of embedding/detection was
formulated under limited detection resources and the optimal decision region and
the optimal embedding rule were offered to the attack-free scenario.

Many researchers from the signal/image processing community (e.g.,
\cite{PAK99},\cite{CoxMiller99},\cite{HartungKutter99}--\nocite{LinnartzKalker98}
\nocite{MillerBloom99}\nocite{MillerCoxBloom00}\cite{PodilchukDelp01},
\cite[Sec.4.2]{BarniBartolini04} and references therein) have devoted research efforts to explore the
problem of optimal watermark embedding and detection with one common
assumption: the watermark embedding rule is normally taken to be additive (linear), i.e., the
stegotext vector $\by$ is given by
\begin{equation}
\label{linear}
\by=\bx+\gamma\bu
\end{equation}
or multiplicative, where each component of $\by$ is given by
\begin{equation}
\label{mul}
y_i=x_i(1+\gamma u_i),~~~i=1,\ldots,n,
\end{equation}
where in both cases, $u_i=\pm 1$, and the choice of $\gamma$ controls the tradeoff between quality of
the stego-signal (in terms of the distortion relative to the covertext signal
$\bx$) and the detectability of the watermark - the ``signal--to--noise'' ratio.

Once the linear embedder \eqref{linear} is adopted,
elementary detection theory tells us that the optimal likelihood--ratio detector under the attack free scenario (i.e., $\bz=\by$), assuming a zero--mean, Gaussian, i.i.d.\ covertext distribution, is a correlation detector, which
decides positively ($H_1$: $\by=\bx+\gamma\bu$) if the correlation,
$\sum_{i=1}^nu_iy_i$, exceeds a certain threshold, and negatively ($H_0$:
$\by=\bx$) otherwise. The reason is that in this case, $\bx$ simply plays the
role of additive noise (the additive
embedding scheme is, in fact, the spread-spectrum modulation technique
\cite{HartungSuGirod99} in which the covertext is treated as an additive noise).
In a similar manner, the optimal test for the multiplicative embedder
(\ref{mul}) is based on the different variances of the $y_i$'s corresponding to
$u_i=+1$ relative to those corresponding to $u_i=-1$, the former being
$\sigma_x^2(1+\gamma)^2$, and the latter being $\sigma_x^2(1-\gamma)^2$, where
$\sigma_x^2$ is the variance of each component of $\bx$.

While in classical detection theory, the additivity (\ref{linear}), (or
somewhat less commonly, the multiplicativity (\ref{mul})) of the noise is part
of the channel model, and hence cannot be controlled, this is not quite the
case in watermark embedding, where one has, at least in principle, the freedom
to design an arbitrary embedding function $\by=f_n(\bx,\bu)$, trading off the
quality of $\by$ and the detectability of $\bu$. Clearly, for an arbitrary
choice of $f_n$, the above described detectors are no longer optimal in general.

Malvar and Flor{\^e}ncio \cite{MalvarFlorencio03} have noticed that better
performance can be gained if $\gamma$ is chosen as a function of the watermark
and the covertext. However, their choice does not lead to the optimal
performance as will be shown later. Recently, Furon \cite{Furon06} explored the
zero-bit watermark problem using a different setting in which the watermark
sequence is a function of the covertext and under a different criterion of
optimality.

While many papers in the literature addressed the problem of computing the
performance of different embedding and detection strategies and plotting their
receiver operating characteristics (ROC) for different values of the problem
dimension $n$ (see,
e.g.,\cite{MillerBloom99},\cite{MillerCoxBloom00},\cite{HernandezGonzalez99}
and references therein), very few works \cite{LiuMoulin03},\cite{LiuMoulin03b} deal
with the optimal asymptotic behavior of the two kinds of error probabilities,
i.e., the exponential decay rate of the two kind of the error probabilities as
$n$ tends to infinity.

The problem of finding the optimum watermark embedder $f_n$ for reliable WM
detection is not trivial: The probabilities of errors of the two kinds (false
positive and false negative) corresponding to the likelihood--ratio detector
induced by a given $f_n$, are, in general, hard to compute, and a--fortiori hard
to optimize in closed form. Moreover, obtaining closed form expressions for the optimal
embedder and decision regions when the covertext distribution is unknown is even harder
(see Section~2 for more details).

Thus, instead of striving to seek the strictly optimum embedder, we take the
following approach: Suppose that one would like to limit the complexity of the
detector by confining its decision to depend on a given set of statistics
computed from $\bz$ and $\bu$. For example, the energy of $\bz$,
$\sum_{i=1}^n z_i^2$, and the correlation $\sum_{i=1}^nu_i z_i$, which are the
sufficient statistics used by the above described correlation detector.
Other possible statistics are those corresponding to the likelihood--ratio
detector of (\ref{mul}), namely, the energies $\sum_{i:~u_i=+1}z_i^2$, and
$\sum_{i:~u_i=-1}z_i^2$, and so on. Within the class of detectors based on a given set of statistics,
we present the optimal (in the Neyman-Pearson sense) embedder and its corresponding
detector for different settings of the problem.

First, we formulate the embedding and detection problem under the attack free scenario. We devise an asymptotically optimal detector and embedding rule among all detectors which base their decisions on the empirical joint distribution of $\bz$ and $\bu$. This modeling assumption, where the detector has access to a limited set of empirical statistics of $\bu$ and $\bz$, has two motivations. First, it enables a fair comparison (in terms of detection computational resources) to different embedding/detection methods reported in the literature of WM in which most of the detectors use a similar set of statistics (mostly, correlation and energy) to base their decisions.
Second, this approach highlights the tradeoff between detection complexity and performance: Extending the set of statistics on which the detector can base its decisions, might improve the system performance, however, it increases the detector's complexity.

Later, we discuss different aspects of the basic problem, namely, practical issues regarding the implementability of the embedder, universality w.r.t. the covertext distribution, other detector's statistics, and the case where the watermark sequence is random too. These results are obtained by extending the techniques, presented in  \cite{Merhav00b},\cite{Gutman89}--\nocite{MGZ89}\cite{ZivMerhav92}, which are closely related to universal hypothesis testing problems.
We apply these results to a zero-mean i.i.d.\ Gaussian covertext distribution with unknown variance. We propose a closed-form expression for the optimal embedder, and suggest a lower bound on the false-negative probability error exponent. By analyzing the error exponent of the additive embedder and using the suggested lower bound, we show that the optimal embedder is superior to the customary additive embedder in the exponential sense.
Finally, we extend these results to memoryless attack channels and worst-case general attack channels.
The worst-attack channel is identified and optimal embedding and detection rules are offered.
The model of general worst-case attack channels, treated here, was already considered in the WM literature but in a different context. In \cite{SomekhMerhav03}, general attack channels were considered, where the capacity and random-coding error exponent where derived for the private watermarking game under general attack channels. In \cite{SomekhMerhav04}, the capacity of public watermark game under general attack channels was derived for constant composition codes. This paper is a further development and an extension of \cite{Merhav05Wacha}, \cite{SabbagMerhav06} and it gives a detailed account for the results of \cite{SabbagMerhav07}.

\section{Basic Derivation}
\label{sec.Basic.Derivation}

We begin with some notation and definitions. Throughout this work, capital letters
represent scalar random variables (RVs\label{p.RV}) and specific realizations of
them are denoted by the corresponding lowercase letters. Random vectors of dimension
$n$ will be denoted by bold-face letters. The notation $\mathbbm{1}\{A\}$, where $A$
is an event, will designate the indicator function of $A$ (i.e.,$\mathbbm{1}\{A\}=1$
if $A$ occurs and $\mathbbm{1}\{A\}=0$ otherwise). We adopt the following conventions: The minimum (maximum) of a function over an empty set is understood to be $\infty$ ($-\infty$).
The notation $a_n \doteq b_n$, for two positive sequences $\{a_n\}_{n \ge 1}$ and $\{b_n\}_{n \ge 1}$, expresses
asymptotic equality in the logarithmic scale, i.e.,
$$
\lim_{n \to \infty} \frac{1}{n} \ln \left(\frac{a_n}{b_n}\right)=0.
$$
Let the vector $\hat{P}_{\bx}=\big\{\hat{P}_{\bx}(a), \; a \in \calX \big\}$ denotes the empirical distribution induced by a vector $\bx \in \calX^n$, where $\hat{P}_{\bx}(a) = \frac{1}{n}{\sum_{i=1}^n \mathbbm{1}\{x_i=a\}}$. The type class $T(\bx)$ is the set of vectors $\tilde{\bx} \in \calX^n$ such that $\hat{P}_{\tilde{\bx}}=\hat{P}_{\bx}$.
Similarly, the joint empirical distribution induced by $(\bx,\by)\in \calX^n \times \calY^n$ is the vector:
\begin{equation}
\hat{P}_{\bx\by}=\left\{\hat{P}_{\bx\by}(a,b),~a\in\calX,~b\in\calY \right\} \;,
\end{equation}
where
\begin{equation}
\hat{P}_{\bx\by}(a,b)=\frac{1}{n}\sum_{i=1}^n\mathbbm{1}\big\{x_i=a,y_i=b\big\},~~~x\in\calX,~y\in\calY \;,
\end{equation}
i.e., $\hat{P}_{\bx\by}(a,b)$ is the relative frequency of the pair $(a,b)$
along the pair sequence $(\bx,\by)$. Likewise, the type class $T(\bx,\by)$ is the set of all pairs $(\tilde{\bx},\tby) \in \calX^n \times \calY^n$ such that $\hat{P}_{\tilde{\bx}\tilde{\by}}=\hat{P}_{\bx\by}$. The conditional type class $T(\by|\bx)$, for  given vectors $\bx \in \calX^n$, and $\by \in \calY^n$ is the set of all vectors $\tby \in \calY^n$ such that $T(\bx,\tby)=T(\bx,\by)$. We denote by $\hat{E}_{\bx\by} (\cdot)$ expectation with respect to empirical joint distribution $\hat{P}_{\bx\by}$. The Kullback-Leibler divergence between two distributions $P$ and $Q$ on $\calA$, where $|\calA| < \infty$  is defined as
$$
\calD(P \| Q) = \sum_{a \in \calA} P(a) \ln \frac{P(a)}{Q(a)} \;,
$$
with the conventions that $0 \ln 0 =0$, and $p \ln \frac{p}{0}=\infty$ if $p >0$. We denote the empirical entropy of a vector $\bx \in \calX^n$ by $\hat{H}_{\bx}(X)$, where
$$
\hat{H}_{\bx}(X)= - \sum_{a \in \calX} \hat{P}_{\bx}(a) \ln \hat{P}_{\bx}(a) \;.
$$
Other information theoretic quantities governed by empirical distributions (e.g., conditional empirical entropy, empirical mutual information) will be denoted similarly.

For two vectors, $\ba,\bb \in \mathbb{R}^n$, the Euclidean inner product is defined
as $\langle\bs{a},\bs{b}\rangle= \sum_{i=1}^n a_i \cdot b_i$ and the $L_2$-norm of a
vector is defined as $\|\ba\|=\sqrt{\langle\bs{a},\bs{a}\rangle}$. Let $\vol\{A\}$ denote the volume of a set $A \subset \reals^n$, i.e., $\vol\{A\} = \int_A d\bx$. We denote by $\sgn(\cdot)$ the signum function, where $\sgn(x)=\ind\{x \ge 0\}-\ind \{x < 0 \}$.

Throughout this paper, and without essential loss of generality, we
assume that the components of $\bx$, $\by$, and $\bz$ all take on values in the same
finite alphabet $\calA$.
In Section~4, the assumption that $\calA$ is finite will be dropped, and $\calA$ will be
allowed to be an infinite set, like the real line. The components of the watermark $\bu$ will always take on values in $\calB=\{-1,+1\}$, as mentioned earlier. Let us further assume
that $\bx$ is drawn from a given memoryless source $P_X$.

Throughout the sequel, until Section~5 (exclusively), we assume that there is no attack, i.e., the channel $W_n(\bz|\by)$ is the identity channel:
$$
W_n(\bz|\by) = \left\{\begin{array}{lll} 1 &,& \bz=\by \\ 0 &,& \textrm{else} \end{array} \right. \;.
$$
This is referred to as the \emph{attack-free} scenario. In this scenario, the detector will use $\by$ and $\bu$ to base its decisions.

For a given $\bu \in \calB^n$, we would like to devise a decision rule that partitions the space $\calA^n$ of sequences $\{\by\}$, observed by the detector, into two complementary regions, $\Lambda$ and $\Lambda^c$, such that for $\by \in \Lambda$, we decide in favor of $H_1$ (watermark $\bu$ is present) and for
$\by\in\Lambda^c$, we decide in favor of $H_0$ (watermark absent: $\by=\bx$).
Consider the Neyman-Pearson criterion of minimizing the false negative
probability
\begin{equation}
\label{fn} P_{fn}=\sum_{\bx:~f_n(\bx,\bu)\in\Lambda^c}P_X(\bx)
\end{equation}
subject to the following constraints:
\begin{itemize}
\item[(1)] Given a certain distortion measure $d_e(\cdot,\cdot)$ and distortion level $D_e$, the distortion
between $\bx$ and $\by$, $d_e(\bx,\by)=d_e\big(\bx,f_n(\bx,\bu)\big)$, does not
exceed $nD_e$.
\item [(2)]
The false positive probability is upper bounded by
\begin{equation}
\label{fpc}
P_{fp}\ebd\sum_{\by\in\Lambda}P_X(\by)\le e^{-\lambda n},
\end{equation}
where $\lambda > 0$ is a prescribed constant.
\end{itemize}
In other words, we would like to choose $f_n$ and $\Lambda$ so as to minimize $P_{fn}$
subject to a distortion constraint and the constraint that the exponential decay
rate of $P_{fp}$ would be at least as large as $\lambda$.

Clearly, the problem is a classical hypothesis problem (under the Neyman-Pearson criterion of optimality), with the following hypotheses: $H_0: \by=\bx $ (the covertext is not ``marked'') and  $H_1:\by=f_n(\bx,\bu)$ (the covertext is ``marked''). Given $f_n$ and $\bu$, we can define the conditional distribution of $\by$ given the two hypotheses:
\ben
P(\by|H_0) &=& P_X(\by) \;\;, \\
P(\by|H_1) &=& \sum_{\bx: f_n(\bx,\bu)=\by} P_X(\bx) \;\;.
\een
where $P_X(\bx)$ is the covertext distribution. The optimal test which minimizes the false-negative probability under the Neyman-Pearson criterion of optimality is the likelihood ratio test (LRT) \cite[p.~34]{VanTrees68}:
\ben
L(\by)= \frac{P(\by|H_1)}{P(\by|H_0)} \begin{array}{c} H_1 \\ > \\ <
\\H_0
\end{array} \eta
\een
where $\eta$ is chosen such that
\be
P_{fp}\big(f_n,\bu\big) = \sum_{\by : L(\by) \ge  \eta} P_X(\by) = e^{-n \lambda} \;.
\ee
Note that $\eta$ is a function of $\lambda$, $f_n$ and $\bu$, therefore, we could not find a closed-form expression for $\eta$ for any general embedding rule and watermark sequence. The false-negative probability associated with the above optimal test is given by
\be
P_{fn} (f_n,\lambda,\bu) = \sum_{\by: L(\by)< \eta} \sum_{\bx:f_n(\bx,\bu)=\by}
P_X(\bx).
\ee
Now, given a distortion level $D_e$ measured using a distortion function $d_e(\cdot,\cdot)$, we would like to devise an embedder $f_n$ which minimizes the false-negative probability while the
distortion between the covertext $\bx$ and the stegotext $\by$ does not exceed
$nD_e$ and the false-positive probability is kept lower than $e^{-n \lambda}$,
i.e.,
\be
f_n^* = \arg \min_{\begin{array}{c} f_n: d_e(\bx,f_n(\bx,\bu))\le nD_e, \forall \bx\\
P_{fp}(f_n,\bu) \le e^{-n\lambda} \end{array}} P_{fn} (f_n,\lambda) \;.
\ee
The above general problem of finding the optimal embedding rule and detection
regions is by no means trivial. The fact that the probabilities of the two kinds of error cannot
be expressed in a close form make it very hard to solve this optimization
problem and, as far as we know, there is no known solution for it.
Moreover, obtaining closed form expressions for the optimal
embedder and decision regions when $P_X$ is unknown is even harder.

We therefore make an additional assumption regarding the statistics employed
by the detector. Suppose that we limit ourselves to the class of
all detectors which base their decisions on certain empirical statistics associated with $\bu$ and $\by$, for example, the empirical joint distribution of
$\by$ and $\bu$, i.e., $\hat{P}_{\bu\by}$.
Note that the requirement that the decision of the detector depends solely on $\hat{P}_{\bu\by}$ means that $\Lambda$ and $\Lambda^c$ are unions of conditional type classes of $\by$ given $\bu$.

It may seem, at a first glance, that the sequence $\bu$ is superfluous in the definition of the problem, since it is available to all legitimate parities.
However, the presence of the watermark sequence $\bu$ at the detector provides the detector with a refined version of the statistics of its input (based on the joint empirical statistics of $\by$ and $\bu$) and can be regarded as a secret key shared by both legitimate sides. This additional information at the detector improves the overall performance of the system.

For a given $\lambda>0$, define
\begin{equation}
\Lambda_*=\left\{\by:~\ln P_X(\by)+n\hat{H}_{\bu\by}(Y|U)+\lambda
n-|\calA|\ln(n+1)\leq 0\right\}.
\end{equation}
The following theorem asserts that $\Lambda_*$ is asymptotically optimal decision region:
\begin{theorem}
\begin{itemize}
\item [(i)] $P_{fp}(\Lambda_*) \le e^{-n (\lambda -\delta_n)}$ where $\lim_{n \to \infty } \delta_n=0$.
\item [(ii)] For every $\Lambda \subseteq \calA^n$ that satisfies $P_{fp}(\Lambda) \le e^{-n \lambda'}$
for some $\lambda'>\lambda$, we have $\Lambda_*^c \subseteq \Lambda^c$ for all
sufficiently large $n$.
\end{itemize}
\end{theorem}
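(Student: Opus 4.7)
\emph{Part (i): False-positive bound.} The key observation is that the defining inequality of $\Lambda_*$ depends on $\by$ only through its joint empirical distribution with $\bu$, since $\ln P_X(\by)=n\sum_a\hat{P}_{\by}(a)\ln P_X(a)$ is a function of $\hat{P}_{\by}$ (hence of $\hat{P}_{\bu\by}$), and so is $\hat{H}_{\bu\by}(Y|U)$. Therefore, for a fixed $\bu$, $\Lambda_*$ is a union of conditional type classes $T(\by|\bu)$. For each such class, I would invoke the standard method-of-types bound $|T(\by|\bu)|\le e^{n\hat{H}_{\bu\by}(Y|U)}$, together with the fact that $P_X(\by)$ is constant on the class, to obtain
\[
P_X\big(T(\by|\bu)\big)\;\le\;e^{n\hat{H}_{\bu\by}(Y|U)}P_X(\by)\;\le\;e^{-\lambda n}(n+1)^{|\calA|},
\]
where the last inequality uses the very definition of $\Lambda_*$. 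Finally, summing over the conditional types in $\Lambda_*$ (of which there are at most polynomially many in $n$, say $(n+1)^{2|\calA|}$, for fixed $\bu$) yields $P_{fp}(\Lambda_*)\le e^{-\lambda n}(n+1)^{c|\calA|}$ for some constant $c$, which is exactly of the form $e^{-n(\lambda-\delta_n)}$ with $\delta_n=O(\ln n/n)\to 0$.

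\emph{Part (ii): Asymptotic optimality.} I will argue by contradiction. Let $\Lambda$ be any decision region based on $\hat{P}_{\bu\by}$ (so that $\Lambda$ is also a union of conditional type classes) satisfying $P_{fp}(\Lambda)\le e^{-n\lambda'}$ with $\lambda'>\lambda$, and suppose there exists $\by\in\Lambda\cap\Lambda_*^c$. Since $\Lambda$ is a union of conditional type classes, the entire class $T(\by|\bu)$ lies inside $\Lambda$. Using the lower bound from the method of types, $|T(\by|\bu)|\ge e^{n\hat{H}_{\bu\by}(Y|U)}/(n+1)^{|\calA||\calB|}$, and the fact that $\by\in\Lambda_*^c$ means $P_X(\by)e^{n\hat{H}_{\bu\by}(Y|U)}>e^{-\lambda n+|\calA|\ln(n+1)}$, I get
\[
P_{fp}(\Lambda)\;\ge\;P_X\big(T(\by|\bu)\big)\;\ge\;\frac{e^{-\lambda n+|\calA|\ln(n+1)}}{(n+1)^{|\calA||\calB|}}\;=\;e^{-\lambda n}\cdot(n+1)^{-|\calA|}.
\]
Comparing with the assumed bound $e^{-n\lambda'}$, the ratio is $e^{n(\lambda'-\lambda)}/(n+1)^{|\calA|}$, which diverges as $n\to\infty$ since $\lambda'>\lambda$. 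Hence for all sufficiently large $n$ the inequality $P_{fp}(\Lambda)\le e^{-n\lambda'}$ is violated, contradicting the hypothesis. Therefore $\Lambda\cap\Lambda_*^c=\emptyset$, i.e.\ $\Lambda_*^c\subseteq\Lambda^c$.

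\emph{Expected obstacle.} Both parts are essentially bookkeeping with the method of types once the right decomposition is identified; the only delicate point is making sure the polynomial prefactor $(n+1)^{|\calA|}$ built into the definition of $\Lambda_*$ is large enough to absorb the upper bound on $|T(\by|\bu)|$ in part (i), while simultaneously being large enough that the lower bound in part (ii) still beats $e^{-n\lambda'}$ asymptotically. The gap $\lambda'-\lambda>0$ is exactly what drives both halves, and the sub-exponential slack $\delta_n=O(\ln n/n)$ is the price paid for using joint-type based detection rather than the full likelihood-ratio detector. No further ingredient beyond the standard type-class size bounds and the fact that both $\Lambda$ and $\Lambda_*$ are unions of conditional type classes is required.
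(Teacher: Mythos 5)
Your proof is correct and follows the same route as the paper: both exploit the fact that $\Lambda$ and $\Lambda_*$ are unions of conditional type classes $T(\by|\bu)$, on which $P_X$ is constant, together with the standard type-cardinality bounds $(n+1)^{-|\calA||\calB|}e^{n\hat{H}_{\bu\by}(Y|U)}\le|T(\by|\bu)|\le e^{n\hat{H}_{\bu\by}(Y|U)}$. Your write-up is in fact slightly more complete than the paper's, which omits the routine part (i) and presents part (ii) as a direct chain of inequalities rather than a contradiction; your explicit use of the gap $\lambda'-\lambda>0$ to dominate the polynomial slack in the cardinality lower bound makes transparent exactly why the ``for all sufficiently large $n$'' qualifier appears in the statement.
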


In the above theorem it is argued that $\Lambda_*$ fulfills the false-positive constraint while minimizes the false-negative probability, i.e., for any decision region $\Lambda$ which fulfills the false-positive constraint and for any embedding rule $f_n(\bx,\bu)$ the following holds
\be
P_{fn} (\Lambda_*^c) \le P_{fn} (\Lambda^c) \;.
\ee

\begin{proof}
Let $T(\by|\bu)\subseteq \Lambda$. Then, we have
\begin{eqnarray}
e^{-\lambda n}&\geq&\sum_{\by'\in\Lambda}P_X(\by')\nonumber\\
&\geq&\sum_{\by'\in T(\by|\bu)}P_X(\by')\nonumber\\
&\geq&|T(\by|\bu)|\cdot P_X(\by)\nonumber\\
&\geq&(n+1)^{-|\calA|}e^{n\hat{H}_{\bu\by}(Y|U)}\cdot P_X(\by)\;,
\end{eqnarray}
where the first inequality is by the assumed false positive constraint, the second inequality is since
$T(\by|\bu)\subseteq \Lambda$, and the third inequality is due to the fact that
all sequences within $T(\by|\bu)$ are equiprobable under $P_X$ as they all have
the same empirical distribution, which forms the sufficient statistics for the
memoryless source $P_X$. In the fourth inequality, we use the well known lower
bound on the cardinality of a conditional type class in terms of the empirical
conditional entropy \cite{CK81}, defined as:
\begin{equation}
\hat{H}_{\bu\by}(Y|U)=-\sum_{u,y}\hat{P}_{\bu\by}(u,y)\ln\hat{P}_{\bu\by}(y|u)
\;,
\end{equation}
where $\hat{P}_{\bu\by}(y|u)$ is the empirical conditional probability of $Y$
given $U$.
We have actually shown that every $T(\by|\bu)$ in $\Lambda$ is also in
$\Lambda_*$, in other words, if $\Lambda$ satisfies the false positive
constraint (\ref{fpc}), it must be a subset of $\Lambda_*$. This means that
$\Lambda_*^c\subseteq \Lambda^c$ and so the probability of $\Lambda_*^c$ is
smaller than the probability of $\Lambda^c$, i.e., $\Lambda_*^c$ minimizes
$P_{fn}$ among all $\Lambda^c$ corresponding to detectors that satisfy
(\ref{fpc}). To establish the asymptotic optimality of $\Lambda_*$, it remains
to show that $\Lambda_*$ itself has a false positive exponent at least
$\lambda$, which is very easy to show using the techniques of \cite[eq.\
(6)]{Merhav00b} and references therein. Therefore, we will not include the
proof of this fact here. Finally, note also that $\Lambda_*$ bases its decision
solely on $\hat{P}_{\bu\by}$, as required.
\end{proof}

While this solves the problem of the optimal detector for a given $f_n$, we still
have to specify the optimal embedder $f_n^*$. Defining $\Gamma_*^c(f_n)$ to be the
inverse image of $\Lambda_*^c$ given $\bu$, i.e.,
\begin{eqnarray}
\label{revD2b}
\Gamma_*^c(f_n)&=&\Big\{\bx:~f_n(\bx,\bu)\in\Lambda_*^c\Big\}\nonumber\\
&=&\left\{\bx:~\ln P_X(f_n(\bx,\bu))+n\hat{H}_{\bu,f_n(\bx,\bu)}(Y|U)+\lambda
n-|\calA|\ln(n+1)> 0\right\},
\end{eqnarray}
then following eq.\ (\ref{fn}), $P_{fn}$ can be expressed as
\begin{equation}
\label{revD2a}
P_{fn}=\sum_{\bx\in\Gamma_*^c(f_n)}P_X(\bx).
\end{equation}
Consider now the following embedder:
\begin{equation}
\label{optemb}
f^*_n(\bx,\bu)=\mbox{argmin}_{\by:~d_e(\bx,\by)\le nD_e}\left[\ln
P_X(\by)+n\hat{H}_{\bu\by}(Y|U)\right],
\end{equation}
where ties are resolved in an arbitrary fashion. Then, it is clear by definition,
that $\Gamma_*^c(f^*_n)\subseteq\Gamma_*^c(f_n)$ for any other competing $f_n$ that
satisfies the distortion constraint, and thus $f_n^*$ minimizes $P_{fn}$ subject to
the constraints.

\section{Discussion}

In this section, we pause to discuss a few important aspects of our basic results,
as well as possible modifications that might be of theoretical and practical
interest.
\subsection{Implementability of the Embedder (\ref{optemb})}
\label{subsec.imp_emb}

The first impression might be that the minimization in (\ref{optemb}) is
prohibitively complex as it appears to require an exhaustive search over the
sphere $\{\by:~d_e(\bx,\by)\le nD_e\}$, whose complexity is exponential in $n$. A
closer look, however, reveals that the situation is not that bad. Note that for
a memoryless source $P_X$,
\begin{equation}
\ln P_X(\by)=-n\left[\hat{H}_{\by}(Y)+\calD(\hat{P}_{\by}\|P_X)\right] \;,
\end{equation}
where $\hat{H}_{\by}(Y)$ is the empirical entropy of $\by$ and
$\calD(\hat{P}_{\by}\|P_X)$ is the divergence between the empirical
distribution of $\by$, $\hat{P}_{\by}$, and the source $P_X$.
Moreover, if $d_e(\cdot,\cdot)$ is an additive distortion measure, i.e.,
$d_e(\bx,\by)=\sum_{i=1}^nd_e(x_i,y_i)$, then $d_e(\bx,\by)/n$ can be represented as
the expected distortion with respect to the empirical distribution of $\bx$ and
$\by$, $\hat{P}_{\bx\by}$. Thus, the minimization in \label{revD3a} (\ref{optemb}) becomes
equivalent to maximizing $[\hat{I}_{\bu\by}(U;Y)+\calD(\hat{P}_{\by}\|P_X)]$
subject to $\hat{E}_{\bx\by}d_e(X,Y)\le D_e$, where $\hat{I}_{\bu\by}(U;Y)$ denotes
the empirical mutual information induced by the joint empirical distribution
$\hat{P}_{\bu\by}$ and $\hat{E}_{\bx\by}$ denotes the aforementioned
expectation with respect to $\hat{P}_{\bx\by}$. Now, observe that for given
$\bx$ and $\bu$, both $[\hat{I}_{\bu\by}(U;Y)+\calD(\hat{P}_{\by}\|P_X)]$ and
$\hat{E}_{\bx\by}d_e(X,Y)\le D_e$ depend on $\by$ only via its conditional type
class given $(\bx,\bu)$, namely, the conditional empirical distribution
$\hat{P}_{\bu\bx\by}(y|x,u)$. Once the optimal $\hat{P}_{\bu\bx\by}(y|x,u)$ has
been found, it does not matter which vector $\by$ is chosen from the
corresponding conditional type class $T(\by|\bx,\bu)$. Therefore, the
optimization across $n$--vectors in \label{revD3b} (\ref{optemb}) boils down to optimization
over empirical conditional distributions, and since the total number of
empirical conditional distributions of $n$--vectors increases only polynomially
with $n$, the search complexity reduces from exponential to polynomial as well.
In practice, one may not perform such an exhaustive search over the discrete
set of empirical distributions, but apply an optimization procedure in the
continuous space of conditional distributions $\{P(y|x,u)\}$ (and then
approximate the solution by the closest feasible empirical distribution). At
any rate, this optimization procedure is carried out in a space of fixed
dimension, that does not grow with $n$.

\subsection{Universality in the Covertext Distribution}
\label{subsec.universal}

Thus far we have assumed that the distribution $P_X$ is known. In practice,
even if it is fine to assume a certain model class, like the model of a
memoryless source, the assumption that the exact parameters of $P_X$ are known
is rather questionable. Suppose then that $P_X$ is known to be memoryless but
is otherwise unknown. How should we modify our results? First observe, that it
would then make sense to insist on the constraint (\ref{fpc}) for {\it every}
memoryless source, to be on the safe side. In other words, eq.~\eqref{fpc} would
be replaced by
\begin{equation}
\max_{P_X} \sum_{\by\in\Lambda}P_X(\by)\le e^{-\lambda n} \;,
\end{equation}
where the maximization over $P_X$ is across all memoryless sources with
alphabet $\calA$. It is then easy to see that our earlier derivation goes
through as before except that $P_X(\by)$ should be replaced by
$\max_{P_X}P_X(\by)$ in all places (see also \cite{Merhav00b}). Since
$\ln\max_{P_X}P_X(\by)=-n\hat{H}_{\by}(Y)$, this means that the modified
version of $\Lambda_*$ compares the empirical mutual information
$\hat{I}_{\bu\by}(U;Y)$ to the threshold $\lambda n-|\calA|\ln(n+1)$ (the
divergence term now disappears). By the same token, and in light of the
discussion in the previous paragraph, the modified version of the optimal
embedder (\ref{optemb}) maximizes $\hat{I}_{\bu\by}(U;Y)$ subject to the
distortion constraint. Both the embedding rule and the detection rule are then
based on the idea of {\it maximum mutual information}, which is intuitively
appealing. For more on this idea and its use as a universal decoding rule see
\cite[Sec.~2.5]{CK81}.

\subsection{Other Detector Statistics}

In the previous section, we focused on the class of detectors that base their
decision on the empirical joint distribution of pairs of letters $\{(u,y)\}$.
What about classes of detectors that base their decisions on larger (and more
refined) sets of statistics? It turns out that such extensions are possible as
long as we are able to assess the cardinality of the corresponding conditional
type class. For example, suppose that the stegotext is suspected to undergo a
desynchronization attack that cyclically shifts the data by $k$ positions, where
$k$ lies in some uncertainty region, say, $\{-K,-K+1,\ldots,-1,0,1,\ldots,K\}$.
Then, it would make sense to allow the detector depend on the joint
distribution of $2K+2$ vectors: $\by$, $\bu$, and all the $2K$ corresponding
cyclic shifts of $\bu$. Our earlier analysis will carry over provided that the
above definition of $\hat{H}_{\bu\by}(Y|U)$ would be replaced the conditional
empirical entropy of $\by$ given $\bu$ and all its cyclic shifts. This is
different from the exhaustive search (ES) approach (see, e.g., \cite{Barni05})
to confront such desynchronization attacks. Note, however, that this works as
long as $K$ is fixed and does not grow with $n$.

\subsection{Random Watermarks}
\label{subsec.randomWM}

Thus far, our model assumption was that $\bx$ emerges from a probabilistic
source $P_X$, whereas the watermark $\bu$ is fixed, and hence can be thought of
as being deterministic. Another possible setting assumes that $\bu$ is random
as well, in particular, being drawn from another source $P_U$, independently of
$\bx$, normally, the binary symmetric source (BSS). This situation may arise,
for example, when security is an issue and then the watermark is encrypted. In
such a case, the randomness of $\bu$ is induced by the randomness of the key.
Here, the decision regions $\Lambda$ and $\Lambda^c$ will be defined as
subsets of $\calA^n\times\calB^n$ and the probabilities of errors $P_{fn}$ and
$P_{fp}$ will be defined, of course, as the corresponding summations of
products $P_X(\bx)P_U(\bu)$. \label{revD4} The fact that $\bu$ is emitted from a memoryless
source with a known distribution, makes this model weaker compared to the model
treated above in which $\bu$ is an individual sequence. Although this model is
somewhat weaker, it can be analyzed for more general classes of detectors. This
is because the role of the conditional type class $T(\by|\bu)$ would be
replaced by the joint type class $T(\bu,\by)$, namely, the set of all {\it
pairs} of sequences $\{(\bu',\by')\}$ that have the same empirical distribution
as $(\bu,\by)$ (as opposed to the conditional type class which is defined as
the set of all such $\by$'s for a given $\bu$). Thus, the corresponding version
of $\Lambda_*$ would be
\begin{equation}
\Lambda_*=\left\{(\bu,\by):~\ln P_X(\by)+\ln
P_U(\bu)+n\hat{H}_{\bu\by}(U,Y)+\lambda n-|\calA|\ln(n+1)\le 0\right\},
\end{equation}
where $\hat{H}_{\bu\by}(U,Y)$ is the empirical joint entropy induced by
$(\bu,\by)$, and the derivation of the optimal embedder is
accordingly.\footnote{Note that in the universal case (where both $P_X$ and
$P_U$ are unknown), this leads again to the same empirical mutual information
detector as before.} The advantage of this model, albeit somewhat weaker, is
that it is easier to assess $|T(\bu,\by)|$ in more general situations than it
is for $|T(\by|\bu)|$. For example, if $\bx$ is a first order Markov source,
rather than i.i.d., and one is then naturally interested in the statistics
formed by the frequency counts of triples $\{u_i=u,~y_i=y,~y_{i-1}=y'\}$, then
there is no known expression for the cardinality of the corresponding
conditional type class, but it is still possible to assess the size of the
joint type class in terms of the empirical first-order Markov entropy of the
pairs $\{(u_i,y_i)\}$. Another example for the differences between random watermark
and deterministic watermark can be seen in Section~\ref{sec.gen.attack}.

It should be also pointed out that once $\bu$ is assumed random (say, drawn
from a BSS), it is possible to devise a decision rule that is asymptotically
optimum for an {\it individual} covertext sequence, i.e., to drop the
assumption that $\bx$ emerges from a probabilistic source of a known model. The
resulting decision rule, obtained using a similar technique, accepts $H_1$
whenever $\hat{H}_{\bu\by}(U|Y)\le 1-\lambda$, and the embedder minimizes
$\hat{H}_{\bu\by}(U|Y)$ subject to the distortion constraint accordingly.

\section{Continuous Alphabet -- the Gaussian Case}

In the previous sections, we considered, for convenience, the simple case where
the components of both $\bx$ and $\by$ take on values in a finite alphabet. It
is more common and more natural, however, to model $\bx$ and $\by$ as vectors
in $\reals^n$. Beyond the fact that, summations should be replaced by
integrals, in the analysis of the previous section, this requires, in general,
an extension of the method of types \cite{CK81}, used above, to vectors with
real--valued components (see, e.g.,
\cite{Merhav89},\cite{Merhav93},\cite{MKLS94}). In a nutshell, a conditional
type class, in such a case, is the set of all $\by$--vectors in $\reals^n$
whose joint sufficient statistics with $\bu$ have (within infinitesimally small tolerance)
prescribed values, and to have a parallel analysis to that of the previous
section, we have to be able to assess the exponential order of the volume of
the conditional type class.

Suppose that $\bx$ is a zero--mean Gaussian vector whose covariance matrix is
$\sigma^2I$, $I$ being the $n\times n$ identity matrix, and $\sigma^2$ is
unknown (cf.\ Subsection~\ref{subsec.universal}). Let us suppose also that the
statistics to be employed by the detector are the energy of $\sum_{i=1}^ny_i^2$
and the correlation $\sum_{i=1}^nu_iy_i$. These assumptions are the same as in
many theoretical papers in the literature of watermark detection. Then, the
conditional empirical entropy $\hat{H}_{\bu\by}(Y|U)$ should be replaced by the
empirical differential entropy $\hat{h}_{\bu\by}(Y|U)$, given by
\cite{Merhav93}:
\be
\label{diffent}
\hat{h}_{\bu\by}(Y|U)&=& \frac{1}{2}\ln\left[2\pi e\cdot
\min_\beta\left(\frac{1}{n}\sum_{i=1}^n(y_i-\beta u_i)^2\right)\right]\nonumber\\
&=&\frac{1}{2}\ln\left[2\pi e\left(\frac{1}{n}\sum_{i=1}^ny_i^2-
\frac{(\frac{1}{n}\sum_{i=1}^nu_iy_i)^2}{\frac{1}{n}\sum_{i=1}^nu_i^2}\right)\right]\nonumber\\
&=&\frac{1}{2}\ln\left[2\pi e\left(\frac{1}{n}\sum_{i=1}^ny_i^2-
(\frac{1}{n}\sum_{i=1}^nu_iy_i)^2\right)\right].
\ee
The justification of eq.~\eqref{diffent} is as follows: For a given
$\epsilon>0$ define the set
\be
T_{\epsilon}(\by | \bu)=\left\{\tby \in \mathbb{R}^n : \Big|\sumi
y_i^2-\sumi \tilde{y}_i^2\Big| \le n \epsilon, \Big|\sumi y_i u_i-\sumi
\tilde{y}_i u_i \Big| \le n \epsilon \right\} \;.
\ee
Similarly as in  Lemma~3 \cite{Merhav93}, it can be shown
that
\be
\label{limits}
\lim_{\epsilon \to 0} \lim_{n \to \infty} \frac{1}{n} \ln  \Big[\vol \big\{ T_\epsilon
(\by|\bu) \big\} \Big] = \hat{h}_{\bu\by}(Y|U)\;.
\ee
\label{revEd}To see this, define an auxiliary channel $\by=\beta \bu+\bz$, where $\bz \sim
\mathcal{N}\big(0,\sigma_z^2 I\big)$ (this channel is used only to evaluate
$\vol \left\{ T_\epsilon (\by|\bu) \right\}$ and is not related to the actual
distribution of $\by$ given $\bu$, see \cite[p.~1262]{Merhav93}).
By tuning the parameters $\beta$ and $\sigma^2_z$ such that the expectations of $\frac{1}{n} \sumi y_i^2$ and $\frac{1}{n} \sumi y_i u_i$ would be $\frac{1}{n} \sumi \tilde{y}_i^2$ and $\frac{1}{n} \sumi \tilde{y}_i u_i$, respectively, the set $T_\epsilon(\by|\bu)$ has a high probability under the auxiliary channel given $\bu$. Moreover, any two vectors in $T_\epsilon(\by|\bu)$ have conditional pdf's which are exponentially equivalent.
Accordingly, using the same technique as in the proof of Lemma~3 in \cite[p.~1268]{Merhav93} (which is based on these observation) we derive an upper and a lower bound on $\vol \big\{T_\epsilon(\by|\bu) \big\}$. These bounds are identical in the logarithmic scale, and so,
\be
\vol \big\{ T_\epsilon (\by|\bu) \big\} 
\doteq e^{n \big[\hat{h}_{\bu\by}(Y|U)+\Delta(\epsilon)\big]} \;,
\ee
and $\lim_{\epsilon \to 0} \Delta(\epsilon) =0 $.

Note that the order in which the limits are taken in \eqref{limits} is important: We first
take the dimension $n$ to infinity, and only then we take $\epsilon$ to zero.
Mathematically speaking, if $\epsilon$ goes to zero for a finite dimension $n$
the volume of $T_\epsilon (\by|\bu)$ equals zero.
The order of the limits has a practical meaning too. The fact that $\epsilon$ is positive for any given dimension means that the detector can calculate the correlation and energy with limited precision. In the absence of such a realistic limitation, one can offer an embedding rule (under the attack-free
case and for continuous alphabet) with zero false-negative and false-positive probabilities by designing an embedder with a range having measure zero
\footnote{E.g., the spread-transform dither modulation (STDM) embedder proposed in \cite[Sec.~V.B]{ChenWornell01}
achieves zero false-negative probability under the attack-free scenario because the embedder range has measure zero. We thank M.~Barni for drawing our attention to this fact.}. This additional limitation that we implicitly impose on the detector, is very natural and it exists in every practical system.

Using the same technique used to evaluate $\hat{h}_{\bu\by}(Y|U)$ in \eqref{diffent},
it can easily be shown that
\be
\label{diff_ent_y}
\lim_{\epsilon \to 0} \lim_{n \to \infty} \frac{1}{n} \ln \Big[ \vol \{T_{\epsilon}(\by)\} \Big] =\frac{1}{2}\ln\left(2\pi e\cdot\frac{1}{n}\sum_{i=1}^ny_i^2\right)  \eqde \hat{h}_{\by}(Y) \;,
\ee
where
\be
T_\epsilon(\by)=\left\{\tilde{\by} \in \mathbb{R}^n : |\sumi y_i^2-\sumi
\tilde{y}_i^2| \le n \epsilon \right\} \; .
\ee
Therefore, the optimal embedder maximizes
\begin{equation}
\label{detector} \hat{I}_{\bu\by}(U;Y)=-\frac{1}{2}\ln\left
(1-\frac{(\frac{1}{n}\sum_{i=1}^nu_iy_i)^2}{\frac{1}{n}\sum_{i=1}^ny_i^2}\right)
\;.
\end{equation}
or, equivalently, \footnote{Note also that the corresponding detector, which
compares $\hat{I}_{\bu\by}(U;Y)$ to a threshold, is equivalent to a correlation
detector, which compares the (absolute) correlation to a threshold that depends
on the energy of $\by$, rather than a fixed threshold (see, e.g.,
\cite{Barni05}).} maximizes
\be
\label{objective}
R(\bu,\by) \eqde  \frac{\langle \bu,\by \rangle^2}{\| \by \|^2}
\ee
subject to the distortion constraint, which in this case, will naturally be
taken to be Euclidean, $\sum_{i=1}^n(x_i-y_i)^2\le nD_e$. While our discussion in
Subsection~\ref{subsec.imp_emb}, regarding optimization over conditional
distributions, does not apply directly to the continuous case considered here,
it can still be represented as optimization over a finite dimensional space
whose dimension is fixed, independently of $n$. In fact, this fixed dimension
is 2, as is implied by the next lemma.

\begin{lemma}
\label{revEb}
The optimal embedding rule under the above setting has the following form:
\begin{equation}
\label{embedder} f^*_n(\bx,\bu)=a\bx+b\bu.
\end{equation}
\end{lemma}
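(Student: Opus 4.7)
The plan is to show that any feasible $\by$ can be replaced by its orthogonal projection onto $\mathrm{span}\{\bx,\bu\}$ without decreasing the objective $R(\bu,\by)$ and without violating the distortion constraint. Since every vector in that span is of the form $a\bx+b\bu$, this forces the optimum to have the claimed structure.

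More concretely, I would proceed as follows. Given any candidate stegotext $\by$ with $\|\bx-\by\|^2 \le nD_e$, decompose
\begin{equation*}
\by = \by_\parallel + \by_\perp, \qquad \by_\parallel \in \mathrm{span}\{\bx,\bu\}, \qquad \by_\perp \perp \bx, \qquad \by_\perp \perp \bu.
\end{equation*}
Two observations are immediate. First, since $\bu \perp \by_\perp$, we have $\langle \bu,\by\rangle = \langle \bu,\by_\parallel\rangle$, so the numerator of $R(\bu,\by)$ is unaffected by removing $\by_\perp$. Second, by Pythagoras, $\|\by\|^2 = \|\by_\parallel\|^2 + \|\by_\perp\|^2$, so replacing $\by$ by $\by_\parallel$ can only decrease the denominator, hence $R(\bu,\by_\parallel) \ge R(\bu,\by)$. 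At the same time, because $\bx \in \mathrm{span}\{\bx,\bu\}$, we also have $\|\bx-\by\|^2 = \|\bx-\by_\parallel\|^2 + \|\by_\perp\|^2$, so $\|\bx-\by_\parallel\|^2 \le \|\bx-\by\|^2 \le nD_e$, and the distortion constraint continues to hold (in fact with slack).

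Thus any maximizer of $R(\bu,\cdot)$ subject to the Euclidean distortion constraint can be taken to lie in $\mathrm{span}\{\bx,\bu\}$, which gives $f^*_n(\bx,\bu) = a\bx + b\bu$ for some scalars $a,b$ (possibly depending on $\bx$ and $\bu$). I don't anticipate a serious obstacle: the argument is essentially a one-line orthogonal-decomposition observation, and the main thing to be careful about is handling the edge case $\|\by_\parallel\|=0$ (which would mean $\langle \bu,\by\rangle = 0$ and hence $R = 0$, so such a $\by$ is certainly not the maximizer whenever a feasible $\by$ with nonzero correlation exists, e.g.\ $\by=\bx$ perturbed slightly toward $\bu$). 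Once the form $a\bx+b\bu$ is established, the actual values of $a$ and $b$ are obtained by a finite-dimensional (two-variable) optimization of $R(\bu,a\bx+b\bu)$ subject to $\|(1-a)\bx - b\bu\|^2 \le nD_e$, confirming the claim that the optimization reduces to a space of fixed dimension $2$.
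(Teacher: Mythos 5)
Your argument is correct and is essentially identical to the paper's own proof: both decompose $\by$ into its component in $\mathrm{span}\{\bx,\bu\}$ (the paper writes this as $a\bx+b\bu+\bz$ with $\bz\perp\bx,\bu$, which is your $\by_\parallel+\by_\perp$), observe that dropping the orthogonal part leaves the numerator $\langle\bu,\by\rangle$ unchanged while only decreasing $\|\by\|^2$, and that the same Pythagorean split applied to $\by-\bx$ shows the distortion constraint is preserved. The remark about the edge case $\|\by_\parallel\|=0$ is a harmless extra observation, not a departure from the paper's route.
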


\begin{proof}
Clearly, every $\by \in \reals^n$ can be represented as $\by=a\bx+b\bu+\bz$,
where $a$ and $b$ are real valued coefficients and $\bz$ is orthogonal to both
$\bx$ and $\bu$ (i.e., $\langle \bu, \bz \rangle = \langle \bx, \bz \rangle
=0$). Now, for any given $\by=a\bx+b\bu+\bz$ such that $\bz \neq 0$, the vector projected onto the subspace spanned by $\bx$ and $\bu$, $\tby=a \bx + b\bu$, achieves a higher squared normalized correlation w.r.t. $\bu$ than the vector $\by$. To see this, consider the following chain of inequalities:
\be
R(\bu,\by) &=&  \frac{\langle \bu,\by \rangle^2}{\| \by \|^2}  \nonumber\\
&=& \frac{\langle \bu,a \bx + b \bu +\bz  \rangle^2}{\langle a \bx + b \bu +\bz, a \bx + b \bu +\bz \rangle} \nonumber\\
&=& \frac{\langle \bu,a \bx + b \bu \rangle^2}{\| a \bx + b \bu \|^2+ \| \bz \|^2} \nonumber\\
&\le& R(\bu,\tilde{\by}) \;.
\ee
In addition, if $\by$ fulfills the distortion constraint, then so does the projected vector $\tby$, i.e.,
\be
\|\by-\bx\|^2 &=& \|(a-1)\bx + b \bu +\bz\|^2  \nonumber\\
&=&  \|(a-1)\bx + b \bu \|^2 + \|\bz\|^2 \nonumber\\
&\ge&  \|(a-1)\bx + b \bu \|^2 \nonumber\\
&=& \|\tilde{\by}-\bx\|^2 \;.
\ee
Therefore, the optimal embedder must have the form $\by= a \bx + b \bu$. In summary, given any $\by$ that satisfies the distortion constraint, by projecting $\by$ onto the subspace spanned by $\bx$ and $\bu$, we improve the correlation without violating the distortion constraint.
\end{proof}

Upon manipulating this optimization problem, by taking advantage of its special
structure, one can further reduce its dimensionality and transform it into a
search over one parameter only (the details are in
Subsection~\ref{sec_OptEmb}).

Going back to the opening discussion in the Introduction, at first glance, this
seems to be very close to the linear embedder (\ref{linear}) that is so
customarily used (with one additional degree of freedom allowing also scaling
of $\bx$). A closer look, however, reveals that this is not quite the case
because the optimal values of $a$ and $b$ depend here on $\bx$ and $\bu$ (via
the joint statistics $\sum_{i=1}^nx_i^2$ and $\sum_{i=1}^nu_ix_i$) rather than
being fixed. Therefore, this is {\it not} a linear embedder.

\subsection{Explicit Derivation of the Optimal Embedder}
\label{sec_OptEmb}

In this subsection, we present a closed-form expression for the
optimal embedder. As was shown in the previous section, the following optimization
problem should be solved:
\be
\label{opti}
& &\max \left[\frac{\left(\frac{1}{n} \sum_{i=1}^n y_i u_i\right)^2}{\frac{1}{n} \sum_{i=1}^n y_i^2} \right]. \nonumber\\
\textrm{subject to:} & & {\sum_{i=1}^n (y_i-x_i)^2 \le nD_e}
\ee
Substituting $\by=a\bx+b\bu$ in eq.~\eqref{opti}, gives:
\be
\label{opt1}
& &\max_{a,b \in \mathbb{R}} \Bigg[ \frac{a^2 \rho^2+2ab\rho+b^2}{a^2\alpha^2+2ab\rho+b^2} \Bigg] \nonumber \\
\textrm{subject to:} & & {(a-1)^2 \alpha^2+2(a-1)b \rho+b^2 \le D}
\ee
where $\alpha^2 \eqde \frac{1}{n}\sumi x_i^2$ and $\rho \eqde \frac{1}{n}\sumi
x_i u_i$. Note that $\alpha^2 \ge \rho^2$ by Cauchy-Schwarz inequality.
\begin{theorem}
\label{th1}
The optimal values of $(a,b)$ are:
\begin{itemize}
\item If $D_e \ge \alpha^2-\rho^2$:
\be
a^*=0 \qquad ; \qquad b^*=\rho+\sqrt{\rho^2-\alpha^2+D}
\ee
\item If $D_e < \alpha^2-\rho^2$:
\be
a^* &=& \arg \max \left\{t(a) \: \big| \: a \in \{a_1,a_2,a_3,a_4\} \bigcap R \right\} \nonumber\\
b^* &=& a^* \cdot t(a^*)
\ee
where
\be
t(a)&=&\frac{(1-a)\rho+\sgn(\rho)\sqrt{D_e-(a-1)^2(\alpha^2-\rho^2)}}{a}  \nonumber \\
R&=&\left[ 1-\sqrt{\frac{D_e}{\alpha^2-\rho^2}} ,1+\sqrt{\frac{D_e}{\alpha^2-\rho^2}} \:
\right] \;, \\
\ee
and
\be
a_{1,2} &=& \frac { (\alpha^2-\rho^2)(\alpha^2-D_e) \pm  \sqrt { D
{\rho}^{2}} \sqrt{(\alpha^2-\rho^2)(\alpha^2-D_e)}}{\alpha^2(\alpha^2-\rho^2)} \; \nonumber \\
a_{3,4}&=&  1 \pm \sqrt{\frac{D_e}{\alpha^2-\rho^2}} \;.
\ee
\end{itemize}
\end{theorem}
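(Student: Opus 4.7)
The plan is to combine Lemma~\ref{revEb} with a two-stage reduction: first eliminate $b$ in favor of $a$ by pointwise optimization, then handle the resulting one-variable problem by calculus. Lemma~\ref{revEb} already confines the search to embedders of the form $\by=a\bx+b\bu$, so \eqref{opt1} is the correct reformulation. A useful preliminary observation is that for $a\neq 0$, writing $t=b/a$ turns the objective into
\[
\frac{(\rho+t)^2}{(\rho+t)^2+(\alpha^2-\rho^2)},
\]
which is monotone increasing in $(\rho+t)^2$. Hence maximizing the objective is equivalent to maximizing $(\rho+t)^2$ over feasible $(a,b)$.

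I would next split into the two cases of the theorem. If $D_e\ge\alpha^2-\rho^2$ then $a=0$ is feasible: the distortion constraint reduces to $(b-\rho)^2\le D_e-\alpha^2+\rho^2$, and $a=0$ makes $\by$ parallel to $\bu$, so the objective attains its trivial upper bound of $1$. Any $b$ in the reduced interval is optimal, and the stated $b^*=\rho+\sqrt{\rho^2-\alpha^2+D_e}$ is one such choice. If $D_e<\alpha^2-\rho^2$ then $a=0$ is infeasible, so the $t$-substitution applies. Viewing the distortion constraint as a quadratic inequality in $b$ for fixed $a$, feasibility in $b$ holds iff $(a-1)^2(\alpha^2-\rho^2)\le D_e$, i.e., iff $a\in R$, with feasible interval $b\in\bigl[(1-a)\rho\pm\sqrt{D_e-(a-1)^2(\alpha^2-\rho^2)}\bigr]$. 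Since at the two endpoints $b+a\rho=\rho\pm\sqrt{\cdot}$, the maximum of $(\rho+b/a)^2$ over feasible $b$ is attained at $b(a)=(1-a)\rho+\sgn(\rho)\sqrt{D_e-(a-1)^2(\alpha^2-\rho^2)}$, which is exactly $a\cdot t(a)$ in the theorem.

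It remains to maximize $F(a)=(\rho+t(a))^2$ over $a\in R$. The maximum is attained either at an interior critical point satisfying $F'(a)=0$, or at an endpoint $a\in\{a_3,a_4\}$. Setting $F'(a)=0$, isolating the square-root, and squaring produces a quadratic in $1-a$ which simplifies to
\[
\alpha^2(\alpha^2-\rho^2)(1-a)^2-2(\alpha^2-\rho^2)D_e(1-a)+D_e(D_e-\rho^2)=0.
\]
Computing its discriminant gives $D_e\rho^2(\alpha^2-\rho^2)(\alpha^2-D_e)$, and the two roots, re-expressed in $a$, are precisely the $a_{1,2}$ of the theorem. Combining these with the boundary candidates $a_3,a_4$ furnishes the finite set $\{a_1,a_2,a_3,a_4\}\cap R$; the optimal $a^*$ is then the candidate achieving the largest objective, and $b^*=a^*\,t(a^*)$.

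The main obstacle is the algebraic passage from $F'(a)=0$ to the clean quadratic displayed above: after differentiating the square-root contribution and clearing denominators, one must square to eliminate the residual radical and then recombine terms so that the cross-contributions in $\alpha^2$ and $\rho^2$ cancel cleanly into a quadratic in $1-a$. Spurious roots introduced by squaring are harmless, because the final selection step simply evaluates the objective at the surviving candidates. Two minor points deserve care: the endpoints $a_3,a_4$ correspond to double roots of the quadratic in $b$, so they are included automatically via the same formula for $b(a)$; and the degenerate case $\rho=0$ must be handled separately, since $\sgn(\rho)$ is then immaterial and either sign choice in the square-root yields the same optimum.
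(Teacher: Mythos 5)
Your proposal is correct and follows essentially the same route as the paper's proof: split on whether $a=0$ is distortion-feasible (i.e.\ whether $D_e \ge \alpha^2-\rho^2$), then in the second case substitute $t=b/a$, reduce to a one-variable problem via the quadratic feasibility condition in $b$, and find critical points plus the two boundary points $a_{3,4}$. The only cosmetic differences are that you maximize $(\rho+t)^2$ rather than $t$ itself (the paper assumes $\rho\ge 0$ and asserts equivalence; your version is the cleaner statement and makes the $\sgn(\rho)$ bookkeeping explicit), and you differentiate $F(a)=(\rho+t(a))^2$ rather than $t(a)$ directly -- these stationarity conditions coincide because $\rho+t(a)=(\rho+S(a))/a$ never vanishes on $R\subset(0,2)$. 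I verified that clearing the radical from $F'(a)=0$ yields exactly the quadratic $\alpha^2(\alpha^2-\rho^2)(1-a)^2-2(\alpha^2-\rho^2)D_e(1-a)+D_e(D_e-\rho^2)=0$ with reduced discriminant $D_e\rho^2(\alpha^2-\rho^2)(\alpha^2-D_e)$, and its roots $a=1-w_{1,2}$ reproduce the paper's $a_{1,2}$.
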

The proof is purely technical and therefore is deferred to the Appendix. We note
that in the case where $D_e \ll \alpha^2-\rho^2$, the value of $a^*$ tends to $1$, and
the value of $b^*$ tends to $\sgn(\rho)\sqrt{D_e}$. Hence, the linear embedder is not
optimal even in the case where $D_e \ll \alpha^2$. We will next use the above values
to devise a lower bound on the exponential decay rate of the false-negative
probability of the optimal embedder, and then compare it to an upper bound on the
false negative exponent of the linear embedder.

\subsection{Lower Bound to the False Negative Error Exponent of the Optimal Embedder}

Since the calculation of the exact false-negative exponent of the optimal optimal embedder is highly non-trivial, in this subsection we derive a lower-bound on this exponent.
Later, we show that even this lower bound is by far larger than the exponent of the false-negative probability of the additive embedder. Therefore, the additive embedder is
sub-optimal in terms of the exponential decay rate of its false negative probability.

The lower bound will be obtained by exploring the performance of a sub-optimal embedder of the form $\by=\bx+\sgn(\rho)\sqrt{D_e} \bu$, which we name the \emph{sign embedder}. This embedder is obtained by setting $a=1$ in \eqref{embedder}(note that this value is in the allowable range $R$ of $a$). We assume that $\bX \sim \mathcal{N}\big(0,\sigma^2 I \big)$. First, we calculate a threshold value $T$
which always guarantees a false-positive exponent not smaller than $\lambda$.
Using the proposed detector \eqref{detector}, the false-positive probability
can be expressed as
\ben
P_{fp}=\Pr\Big\{\hat{I}_{\bu \by}(U;Y) > T \; \big| \; H_0 \Big\} & = &
\Pr\Big\{\hat{\rho}_{\bu \by}^2>1-e^{-2T} \; \big|\; H_0\Big\}\\
& = & 2 \Pr \Big\{\hat{\rho}_{\bu \by}> \sqrt{1-e^{-2T}}\; \big| \; H_0 \Big\}
\een
where $\hat{\rho}_{\bu \by}=\frac{\langle\bu,\by\rangle}{\|\bu\| \cdot
\|\by\|}$ is the normalized correlation between $\bu$ and $\by$. Because
under $H_0$ $\bY=\bX$, and because of the radial symmetry of the pdf of $\bX$,
we can conclude that for large $n$ \cite[p.~295]{Wyner73}:
$$
P_{fp}=\frac{2A_{n}(\theta)}{A_{n}(\pi)} \doteq e^{n \ln (\sin \theta)} \;,
$$
where $A_n(\theta)$ \footnote{It is well-known \cite[p.~293]{Wyner73} that
$A_n(\theta)=\frac{(n-1)\pi^{(n-1)/2}}{\Gamma\left(\frac{n}{2}\right)}
\int_0^\theta \sin^{(n-2)} (\varphi) d \varphi$ and $A_n(\pi)=2A_n(\pi/2)$.} is
the surface area of the $n$-dimensional spherical cap cut from a unit sphere
about the origin by a right circular cone of half angle
$\theta=\arccos\big(\sqrt{1-e^{-2T}}\big)$ ($0 < \theta \le \pi/2$). Since we
required that $P_{fp} \le e^{-n \lambda}$, then $\ln(\sin \theta)$ must not
exceed $-\lambda$, which means that
\be
- \lambda &\ge&  \ln(\sin \theta)  \nonumber \\
T &\ge& -\frac{1}{2} \ln \left[1-\cos^2 \left(\arcsin (e^{- \lambda}) \right)
\right]= \lambda \;,
\ee
where the last equality was obtained using the fact that
$\cos\big(\arcsin(x)\big)=\sqrt{1-x^2}$. Hence, setting $T=\lambda$ ensures a
false positive probability not greater than $e^{-n \lambda}$ for large $n$.
Define the false-negative exponent of the sign embedder
\be
\label{E_se_def}
E_{fn}^{se} \eqde \lim_{n \to \infty} -\frac{1}{n} \ln P_{fn}
\ee
where the false-negative probability is given by
\be
P_{fn} = \Pr\Big\{\hat{I}_{\bu \by}(U;Y) \le \lambda \;\big|\; H_1 \Big\} =
\Pr\Big\{\hat{\rho}^2_{\bu \by} \le 1-e^{-2 \lambda}\;\big|\; H_1 \Big\}.
\ee

\begin{theorem}
\label{th2}
The false-negative exponent of the sign embedder is given by
\be
\label{fn_exp_se}
E_{fn}^{se}(\lambda,D_e)=\left\{ \begin{array}{lll} 0 & , & \frac{D_e e^{-2\lambda}}{1-e^{-2\lambda}} \le \sigma^2 \\
                          \frac{1}{2}\left[\frac{D_e e^{-2\lambda}}{\sigma^2(1-e^{-2\lambda})}-\ln\left(\frac{D_e e^{-2\lambda}}{\sigma^2(1-e^{-2\lambda})}\right)-1 \right] & , & \textrm{else} \end{array}  \right.
\ee
\end{theorem}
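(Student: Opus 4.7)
The plan is to reduce the false-negative event to a simple chi-squared large-deviation event and then invoke Cram\'er's theorem.

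First I would substitute $\by = \bx + \sgn(\rho)\sqrt{D_e}\,\bu$ into the test statistic. Since $\|\bu\|^2 = n$ and $\langle\bu,\bx\rangle = n\rho$, a direct computation gives $\langle\bu,\by\rangle = n(|\rho| + \sqrt{D_e})\sgn(\rho)$ and $\frac{1}{n}\|\by\|^2 = \alpha^2 + 2\sqrt{D_e}|\rho| + D_e$, so that
\[
\hat{\rho}_{\bu\by}^{\,2} \;=\; \frac{(|\rho|+\sqrt{D_e})^2}{(|\rho|+\sqrt{D_e})^2 + S^2},
\qquad \text{where } S^2 := \alpha^2 - \rho^2 .
\]
A short rearrangement shows that the false-negative event $\{\hat{\rho}_{\bu\by}^{\,2} \le 1 - e^{-2\lambda}\}$ is equivalent to
\[
(|\rho|+\sqrt{D_e})^2 \;\le\; (e^{2\lambda}-1)\, S^2 .
\]

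Next I would exploit the Gaussian structure. Setting $v_i := u_i x_i$, the $v_i$ are i.i.d.\ $\mathcal{N}(0,\sigma^2)$, and $\rho = \frac{1}{n}\sum_i v_i$, $S^2 = \frac{1}{n}\sum_i (v_i-\rho)^2$ are the sample mean and sample variance. By the classical Fisher--Cochran result they are independent, with $\rho \sim \mathcal{N}(0,\sigma^2/n)$ concentrating at $0$ at rate $\sqrt{n}$, and $nS^2/\sigma^2 \sim \chi^2_{n-1}$ concentrating at $\sigma^2$. The typical configuration satisfies the event iff $D_e \le (e^{2\lambda}-1)\sigma^2$, i.e.\ $\frac{D_e e^{-2\lambda}}{1-e^{-2\lambda}} \le \sigma^2$; in this regime $P_{fn}\to 1$ and the exponent vanishes, giving the first branch of \eqref{fn_exp_se}.

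In the opposite regime I would sandwich $P_{fn}$. Using $|\rho|\ge 0$, the event implies $S^2 \ge D_e/(e^{2\lambda}-1)$, yielding the upper bound
\[
P_{fn} \;\le\; \Pr\bigl\{S^2 \ge D_e/(e^{2\lambda}-1)\bigr\}.
\]
Conversely, by independence of $\rho$ and $S^2$, for any $\delta>0$,
\[
P_{fn} \;\ge\; \Pr\{|\rho| \le \delta\}\cdot \Pr\bigl\{S^2 \ge (\sqrt{D_e}+\delta)^2/(e^{2\lambda}-1)\bigr\},
\]
and the first factor tends to $1$ as $n\to\infty$ for fixed $\delta>0$. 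Applying Cram\'er's theorem with cumulant generating function $\Lambda(\theta) = -\tfrac12\ln(1-2\theta)$ (for $\theta < 1/2$) to the $\chi^2_1$ summands produces the rate function $I(t) = \tfrac12(t - \ln t - 1)$, evaluated at $t = D_e/[\sigma^2(e^{2\lambda}-1)] = D_e e^{-2\lambda}/[\sigma^2(1-e^{-2\lambda})]$. Letting $\delta\downarrow 0$, the upper and lower bounds match and yield the second branch of \eqref{fn_exp_se}.

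The one mildly delicate point is standard large-deviation bookkeeping: one must confirm that the Cram\'er rate for the sample variance $S^2$ (which uses $n-1$ summands together with a mean-correction of order $1/\sqrt{n}$) coincides with that of the empirical second moment $\frac{1}{n}\sum v_i^2/\sigma^2$, so that the exponent is precisely $I(t)$. Beyond this check, every step is routine.
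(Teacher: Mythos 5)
Your proof is correct, and the final formula matches the paper's, but you take a genuinely different route. The paper conditions on $\alpha^2=r$ (whose $n\alpha^2/\sigma^2$ is $\chi^2_n$), expresses $\Pr\{\hat\rho^2_{\bu\by}\le\tau \mid \alpha^2=r\}$ via ratios of spherical-cap surface areas $A_n(\Theta(r))/A_n(\pi)$, writes $P_{fn}$ as an explicit integral, and applies Laplace's method, arguing that the cap term decays faster so that the $\chi^2_n$ tail dictates the exponent $\min_{r\ge D_e(1-\tau)/\tau}\tfrac12[r/\sigma^2-\ln(r/\sigma^2)-1]$. You instead observe that $v_i:=u_ix_i$ are i.i.d.\ $\mathcal{N}(0,\sigma^2)$, identify $\rho$ and $S^2=\alpha^2-\rho^2$ with their sample mean and (biased) sample variance, invoke Fisher--Cochran independence so that $\rho\sim\mathcal{N}(0,\sigma^2/n)$ and $nS^2/\sigma^2\sim\chi^2_{n-1}$, and then squeeze the false-negative event between two pure $\chi^2_{n-1}$ tail events to apply Cram\'er's theorem with rate $I(t)=\tfrac12(t-\ln t-1)$. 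Your two-sided sandwich (dropping $|\rho|\ge 0$ for the upper bound; intersecting with $\{|\rho|\le\delta\}$ and using independence for the lower bound, then letting $\delta\downarrow 0$) is cleaner than the paper's ``which integral dominates'' argument and avoids the spherical-cap calculus entirely. The trade-off is purely stylistic: the paper's cap-area machinery is reused verbatim in Theorem~3 for the additive embedder, whereas your probabilistic reduction is more self-contained and makes the independence structure explicit. Your final bookkeeping remark is correctly resolved by noting that a $\chi^2_{n-1}$ threshold of $nc/\sigma^2$ has rate $(n-1)\,I\!\big(\tfrac{n}{n-1}c/\sigma^2\big)\sim n\,I(c/\sigma^2)$, so the $n$ versus $n-1$ discrepancy vanishes in the exponent.
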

The proof, which is mainly technical, is deferred to the Appendix. Let us
explore some of the properties of $E_{fn}^{se}(\lambda,D_e)$. First, it is clear
that $E_{fn}^{se}(0,D_e)=\infty$ (the detector output is constantly $H_1$) since
$\hrho_{\bu \by}^2 \ge 0$.
In addition, $E_{fn}^{se}(\lambda,0)=0$ ($\by=\bx$ and therefore does not contain any
information on $\bu$). For a given $D_e$, $E_{fn}^{se}(\lambda,D_e)=0$ for $\lambda
\ge \frac{1}{2} \ln \left(1+\frac{D_e}{\sigma^2}\right)$.

The exact value of the optimal exponent achieved when the optimal embedder is
employed is too involved to calculate. However, we can use some of the
properties of the optimal embedder to improve the lower bound on the optimal
exponent. According to Theorem~\ref{th1}, in the case where $D_e \ge
\alpha^2-\rho^2$, the optimal embedder can completely ``erase'' the covertext
and therefore achieves a zero false negative probability. We use this property
to improve the performance by introducing sub-optimum embedder which outperforms the sign embedder.
Since \label{revD5} $D_e \ge \alpha^2 \ge
\alpha^2-\rho^2$, the following embedding rule is obtained: $y=a\bx+b \bu$
where
\be
\label{sign_improved}
(a,b)=\left\{\begin{array}{lll} (0,\rho+\sqrt{\rho^2-\alpha^2+D_e}) & , & D_e \ge \alpha^2\\
                                (1,\sgn(\rho)\sqrt{D_e}) & , & \textrm{else}
                \end{array} \right. \;\;.
\ee
This embedder, which is an improved version of the sign embedder (but still sub-optimal), erases the
covertext in the cases where $D_e \ge \alpha^2$ (to keep the embedding rule a
function of one parameter, we chose to ``erase'' the covertext only if $D_e \ge
\alpha^2$). Its performance is presented in the following Corollary:
\begin{corollary}
\label{cor1}
For $\lambda > \frac{1}{2} \ln 2$, the false negative exponent of the improved sign
embedder is given by:
\be
\label{imp_sign_E}
\label{fn_exp_lb}
E(\lambda,D_e)=\left\{ \begin{array}{lll} 0 & , & D_e \le \sigma^2 \\
                          \frac{1}{2}\left[\frac{D_e}{\sigma^2}-\ln\left(\frac{D_e}{\sigma^2}\right)-1 \right] & , & \textrm{else} \end{array}  \right.
                          \qquad ;
\ee
otherwise, the false-negative exponent equals to $E_{fn}^{se}(\lambda,D_e)$.
\end{corollary}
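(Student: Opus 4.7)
The plan is to exploit the two-piece structure of the improved sign embedder via the decomposition $\{\alpha^2 \le D_e\} \cup \{\alpha^2 > D_e\}$. On $\{\alpha^2 \le D_e\}$ the embedder erases the covertext ($a=0$) and outputs $\by = b\,\bu$, perfectly aligned with $\bu$; hence $\hat{I}_{\bu\by}(U;Y) = \infty$ exceeds any finite threshold, the detector accepts $H_1$, and no false-negative event can occur. On $\{\alpha^2 > D_e\}$ the embedder coincides with the sign embedder $\by = \bx + \sgn(\rho)\sqrt{D_e}\,\bu$. Thus
$$
P_{fn}^{imp} \;=\; \Pr\!\big\{\text{sign embedder errs},\; \alpha^2 > D_e \,\big|\, H_1\big\}.
$$

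From this identity two natural upper bounds follow: $P_{fn}^{imp} \le P_{fn}^{se} \doteq e^{-nE_{fn}^{se}(\lambda,D_e)}$ by Theorem~\ref{th2}, and $P_{fn}^{imp} \le \Pr\{\alpha^2 > D_e\} \doteq e^{-nE_\chi(D_e)}$ by a standard Cram\'er estimate for the scaled chi-square $\alpha^2 = \tfrac{1}{n}\|\bX\|^2$, with $E_\chi(D_e) = \tfrac{1}{2}[D_e/\sigma^2 - \ln(D_e/\sigma^2) - 1]$ for $D_e > \sigma^2$ and $0$ otherwise. Combining gives $-\tfrac{1}{n}\ln P_{fn}^{imp} \ge \max\{E_{fn}^{se}, E_\chi\}$. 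Setting $c(\lambda) := e^{-2\lambda}/(1 - e^{-2\lambda})$, one can rewrite $E_{fn}^{se}(\lambda,D_e) = \tfrac{1}{2}[D_e c(\lambda)/\sigma^2 - \ln(D_e c(\lambda)/\sigma^2) - 1]$ when positive, so the comparison reduces to the convex function $g(x) := \tfrac{1}{2}[x - \ln x - 1]$, which vanishes at $x = 1$ and is strictly increasing on $[1,\infty)$. Since $c(\lambda) < 1 \iff \lambda > \tfrac{1}{2}\ln 2$, the $E_\chi$ bound dominates in that regime and yields the claimed $E(\lambda,D_e)$, while $E_{fn}^{se}$ dominates in the complementary regime, giving the two branches of the corollary.

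For the matching converse when $\lambda > \tfrac{1}{2}\ln 2$, I would write the sign-embedder error condition $\hat{\rho}_{\bu\by}^2 \le 1 - e^{-2\lambda}$ under $H_1$ in closed form using $\|\by\|^2 = n(\alpha^2 + 2|\rho|\sqrt{D_e} + D_e)$ and $\langle \bu, \by\rangle = n(|\rho| + \sqrt{D_e})$, obtaining
$$
(1 - \mu)\alpha^2 - \rho^2 - 2\mu\sqrt{D_e}\,|\rho| - \mu D_e \;\ge\; 0, \qquad \mu := e^{-2\lambda}.
$$
Conditioning on $\{\alpha^2 > D_e + \epsilon\}$, the spherical symmetry of $\bX$ given its radius forces $\rho/\alpha$ to concentrate on the $O(n^{-1/2})$ scale, so the shell event $\{|\rho| \le \delta\}$ holds with conditional probability tending to $1$. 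On this shell the left-hand side reduces to $(1 - 2\mu)D_e + (1 - \mu)\epsilon + O(\delta)$, which is strictly positive when $\mu < 1/2$, i.e.\ $\lambda > \tfrac{1}{2}\ln 2$. Therefore $P_{fn}^{imp} \dotge \Pr\{\alpha^2 > D_e + \epsilon,\; |\rho| \le \delta\} \doteq e^{-nE_\chi(D_e + \epsilon)}$; letting $\epsilon \downarrow 0$ and invoking continuity of $E_\chi$ closes the matching exponent. For $\lambda \le \tfrac{1}{2}\ln 2$, the sign-embedder error already forces $\alpha^2 \gtrsim D_e c(\lambda) \ge D_e$, so the constraint $\{\alpha^2 > D_e\}$ is redundant and $P_{fn}^{imp} \doteq P_{fn}^{se}$, and Theorem~\ref{th2} closes the argument. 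The main technical obstacle is the conditional concentration of $\rho$ inside the rare $\alpha^2$-shell, which follows from routine Gaussian spherical-symmetry; the conceptual content is the clean identification of $\tfrac{1}{2}\ln 2$ as the threshold between the two regimes, emerging purely from the sign of $1 - c(\lambda)$ and the monotonicity of $g$.
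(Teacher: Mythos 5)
Your proof is correct and arrives at the same exponent, but via a genuinely different route from the paper. The paper reuses the integral representation from Theorem~\ref{th2} almost verbatim: since the improved embedder produces no false negative on $\{\alpha^2\le D_e\}$, they simply replace the lower limit of integration $D_e(1-\tau)/\tau$ by $D_e$ (legitimate precisely when $\lambda>\tfrac12\ln 2$, which makes $D_e(1-\tau)/\tau<D_e$) and then re-run the same Laplace-method calculation, so achievability and converse come out simultaneously from one saddle-point analysis. You instead split the false-negative event as $\{\text{sign embedder errs}\}\cap\{\alpha^2>D_e\}$, derive the two elementary upper bounds $P_{fn}^{imp}\le P_{fn}^{se}$ and $P_{fn}^{imp}\le\Pr\{\alpha^2>D_e\}$, observe that the two candidate exponents are $g(r/\sigma^2)=\tfrac12[r/\sigma^2-\ln(r/\sigma^2)-1]$ evaluated with thresholds $D_e c(\lambda)$ and $D_e$ respectively, and then prove the matching lower bound on $P_{fn}^{imp}$ separately by a conditional-concentration argument ($\rho$ concentrates near $0$ on a thin shell of $\alpha^2$ around the saddle point, and the error inequality $(1-\mu)\alpha^2-\rho^2-2\mu\sqrt{D_e}|\rho|-\mu D_e\ge 0$ then holds deterministically for $\mu<1/2$). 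This buys you a more self-contained argument that makes the $\tfrac12\ln 2$ threshold conceptually transparent as the sign change of $1-c(\lambda)$, at the cost of being longer than the paper's one-line modification. One small caution worth making explicit in a polished write-up: the achievability step only gives $\max\{E_{fn}^{se},E_\chi\}$ as a lower bound on the exponent, and in general $\Pr\{A\cap B\}$ can decay faster than both marginals; it is precisely your shell-concentration converse that certifies this does not happen here, so that step is load-bearing and not merely a formality.
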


The proof is deferred to the Appendix. The fact that the optimal embedder can offer
a positive false-negative exponent for every value of $\lambda$ is not surprising
due to its ability to erase the covertext, which leads to zero probability of
false-negative. Although the improved sign embedder can offer a tighter lower bound,
the improvement is made only in the case where $D_e \ge \sigma^2$ (though it is not
known a priori to the embedder). Nevertheless, it emphasizes the true potential of
the optimal embedder and the fact that the sign embedder is truly inferior to the
optimal embedder. In Figure~2, the false negative exponent of the sign embedder and
the false negative exponent of the improved embedder are plotted as functions of
$\lambda$ for a given values of $D_e$ and $\sigma$. The point where the two graphs
break apart is $\lambda=\frac{1}{2} \ln(2)$. From this point on, the improved sign embedder
achieves a fixed value of $0.5({D_e}/{\sigma^2}-\ln(D_e/\sigma^2)-1)$.

\begin{figure}[h!]
    \centering
    \psfrag{E2}[][][.7]{$E_{fn}^{se}$}
    \psfrag{E3}[][][.7]{$\qquad \; E_{fn}^{improved}$}
    \psfrag{L}[][][.65]{$\lambda$}
    \includegraphics[width=4in]{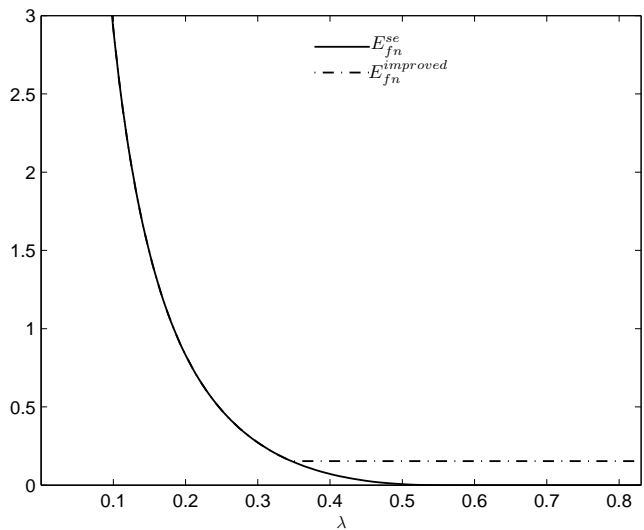}
    \label{fig1}
    \caption{Error exponents of the sign embedder and its improved version for $\sigma^2=1$ and $D_e=2$.}
\end{figure}

\subsection{Comparison to the Additive Embedder}

Our next goal is to calculate the exponent of the false-negative probability of
the linear additive embedder $\by=\bx+ \sqrt{D_e}\bu$, where a normalized
correlation detector is employed. Again, we first calculate a threshold value
used by the detector which ensures a false-positive probability not greater
than $e^{-n \lambda}$. The false positive probability is given by
\be
P_{fp}= \Pr\left\{\hrho_{\bu \by} > T  \big| H_0
\right\}=\Pr\left\{\frac{\langle\bu,\bx\rangle}{\|\bu\| \cdot \|\bx\|} > T
\right\} = \frac{A_n(\theta)}{A_n(\pi)} \doteq e^{n \ln (\sin \theta)} \;,
\ee
where $\theta= \arccos(T)$ ($0 < \theta \le \pi/2$). The second equality is due to
the fact that under $H_0$ $\bY=\bX$, and the third equality is again, due to the radial
symmetry of the pdf of $\bX$. Then, $\ln (\sin \theta) \le - \lambda$ implies:
\be
T &\ge& \cos \left[\arcsin \Big(e^{-\lambda}\Big) \right] = \sqrt{1-e^{-2\lambda}} \;,
\ee
and therefore, letting $T=\sqrt{1-e^{-2\lambda}}$ ensures a false-positive
probability exponentially not greater than $e^{-n \lambda}$. Note that $\lambda \ge 0$
implies that $T$ must be non-negative. Define
\be
\label{revD7}
\Psi_1(r) &\eqde& \arccos \left[\frac{\sqrt{D_e}(T^2-1)+T
\sqrt{r-D_e(1-T^2)}}{\sqrt{r}}\right]
\ee
and
define the false-negative exponent of the additive embedder
\be
E_{fn}^{ae} \eqde \lim_{n \to \infty} -\frac{1}{n} \ln P_{fn},
\ee
where the false-negative probability is given by
\be
P_{fn}= \Pr \left\{\hrho_{\bu \by} \le  \sqrt{1-e^{-2\lambda}} \big| H_1
\right\}.
\ee

\begin{theorem}
\label{th3}
The false negative exponent of the additive embedder is given by
\be
E_{fn}^{ae}(\lambda,D_e)= \min \big\{E_1(\lambda,D_e),E_2(\lambda,D_e) \big\}
\ee
where,
\be
E_1(\lambda,D_e)& = &\min_{D_ee^{-2\lambda} < r \le \frac{D_ee^{-2\lambda}}{1-e^{-2\lambda}}} \frac{1}{2} \Bigg[\frac{r}{\sigma^2}-\ln\left(\frac{r}{\sigma^2}\right)-2\ln \sin \big(\Psi_1(r)\big)-1 \Bigg] \nonumber\\
E_2(\lambda,D_e)& = &\left\{ \begin{array}{lll} 0 & , & \frac{D_ee^{-2\lambda}}{1-e^{-2\lambda}} \le \sigma^2 \\
                          \frac{1}{2}\left[\frac{D_ee^{-2\lambda}}{(1-e^{-2\lambda}) \sigma^2}-\ln\left(\frac{D_ee^{-2\lambda}}{(1-e^{-2\lambda}) \sigma^2}\right)-1 \right] & , & \textrm{else} \end{array}  \right.
\ee
$E_{fn}^{ae}(\lambda,D_e) < E_{fn}^{se}(\lambda,D_e)$ for
$\frac{D_ee^{-2\lambda}}{1-e^{-2\lambda}} > \sigma^2$ and
\end{theorem}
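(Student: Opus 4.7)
The plan is to reduce the computation to a large-deviations optimization in the two sufficient statistics of the covertext, $\alpha^2=\|\bx\|^2/n$ and $\rho_\bx=\langle\bu,\bx\rangle/n$. Since $\bx\sim\mathcal{N}(0,\sigma^2 I)$ and $\bu\in\{\pm1\}^n$ with $\|\bu\|^2=n$, the variables $\{u_ix_i\}$ are i.i.d.\ $\mathcal{N}(0,\sigma^2)$, so $(\rho_\bx,\alpha^2)$ is the empirical mean and second moment of an i.i.d.\ Gaussian sample and satisfies, by Cram\'er's theorem on the plane (equivalently, by the Gaussian type-class counting used to obtain \eqref{diffent}), a large-deviations principle with rate function
$$I(\rho_\bx,\alpha^2)=\tfrac{1}{2}\!\left[\frac{\alpha^2}{\sigma^2}-\ln\!\frac{\alpha^2-\rho_\bx^2}{\sigma^2}-1\right],\qquad \alpha^2\ge\rho_\bx^2.$$
Under the additive embedder $\by=\bx+\sqrt{D_e}\bu$ one has $\|\by\|^2/n=\alpha^2+2\sqrt{D_e}\rho_\bx+D_e$ and $\langle\bu,\by\rangle/n=\rho_\bx+\sqrt{D_e}$, so, setting $T=\sqrt{1-e^{-2\lambda}}$ as derived in the text and squaring the event $\hat{\rho}_{\bu\by}\le T$ on the region $\rho_\bx\ge-\sqrt{D_e}$, one obtains the quadratic constraint
$$\rho_\bx^2+2(1-T^2)\sqrt{D_e}\,\rho_\bx+(1-T^2)D_e-T^2\alpha^2\le 0,$$
with roots $\rho_\pm=-(1-T^2)\sqrt{D_e}\pm T\sqrt{\alpha^2-(1-T^2)D_e}$, real iff $\alpha^2\ge(1-T^2)D_e=D_ee^{-2\lambda}$. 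The complementary branch $\rho_\bx<-\sqrt{D_e}$ that bypasses the squaring is infeasible when $\alpha^2<D_e$ (as it would require $\rho_\bx^2>D_e>\alpha^2$) and merges with $[\rho_-,\rho_+]$ when $\alpha^2\ge D_e$, so the feasible set for the false-negative event is exactly $\rho_\bx\in[\rho_-,\rho_+]$.

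Next I would carry out $E_{fn}^{ae}=\inf\{I(\rho_\bx,\alpha^2):\ \alpha^2\ge(1-T^2)D_e,\ \rho_\bx\in[\rho_-,\rho_+]\}$ in two stages, first over $\rho_\bx$ for fixed $\alpha^2$ and then over $\alpha^2$. Because $I$ is strictly increasing in $\rho_\bx^2$, the inner minimizer is $\rho_\bx=0$ when $0\in[\rho_-,\rho_+]$, and otherwise the root closer to the origin, which in the relevant regime is $\rho_+$. Direct computation shows $0\in[\rho_-,\rho_+]$ iff $\alpha^2\ge c$, where $c\eqde D_e(1-T^2)/T^2=D_ee^{-2\lambda}/(1-e^{-2\lambda})$. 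When $\alpha^2\ge c$ the inner minimum is $\tfrac{1}{2}[\alpha^2/\sigma^2-\ln(\alpha^2/\sigma^2)-1]$, and outer-minimizing over $\alpha^2\ge c$ gives $\alpha^2=\max(\sigma^2,c)$, reproducing $E_2(\lambda,D_e)$ exactly (including the trivial subcase $E_2=0$ when $c\le\sigma^2$). When $(1-T^2)D_e\le\alpha^2<c$, the key algebraic identity is
$$\alpha^2-\rho_+^2=(1-T^2)\Bigl(T\sqrt{D_e}+\sqrt{\alpha^2-(1-T^2)D_e}\Bigr)^{\!2}=\alpha^2\sin^2\Psi_1(\alpha^2),$$
where the second equality recognizes $1-\cos^2\Psi_1$ from the definition of $\Psi_1$ in the theorem. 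Substituting this into $I(\rho_+,\alpha^2)$ turns it into precisely the integrand of $E_1(\lambda,D_e)$, and minimizing over $\alpha^2$ in the range $(D_ee^{-2\lambda},c]$ recovers $E_1$. Combining the two cases yields $E_{fn}^{ae}=\min(E_1,E_2)$.

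For the strict inequality $E_{fn}^{ae}<E_{fn}^{se}$ in the regime $c>\sigma^2$, I would invoke Theorem~\ref{th2} to identify $E_{fn}^{se}$ with $E_2(\lambda,D_e)$ in that regime, reducing the claim to $E_1<E_2$. At the right endpoint $r=c$ of the range of $E_1$, direct substitution gives $\cos\Psi_1(c)=0$, i.e., $\Psi_1(c)=\pi/2$, so the integrand of $E_1$ at $r=c$ equals $E_2$. It then suffices to exhibit $r<c$ at which the integrand is strictly smaller, which I would do by differentiating with respect to $r$ at $r=c$: the moment part $\tfrac{1}{2}[r/\sigma^2-\ln(r/\sigma^2)-1]$ contributes $\tfrac{1}{2\sigma^2}(1-\sigma^2/c)>0$ (since $c>\sigma^2$), while the angular term $-\ln\sin^2\Psi_1(r)$ has vanishing derivative at $r=c$ because $\sin\Psi_1$ attains its interior maximum there. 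Hence the integrand is strictly decreasing as $r$ drops below $c$, so a small decrease yields a strictly smaller value and $E_1<E_2$.

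The main obstacle I expect is the bookkeeping that links the Gaussian large-deviations rate in $(\rho_\bx,\alpha^2)$ to the angular formulation with $\Psi_1$: the identity $\alpha^2-\rho_+^2=\alpha^2\sin^2\Psi_1(\alpha^2)$ is what makes the two pictures agree and collapses the constrained two-dimensional optimization into the univariate formula for $E_1$ stated in the theorem. A secondary (routine) obstacle is the verification that the squared-out branch $\rho_\bx<-\sqrt{D_e}$ cannot improve the exponent over the quadratic region, as sketched above.
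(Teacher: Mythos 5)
Your proof is correct, but it reorganizes the calculation relative to the paper. The paper conditions on $\alpha^2=r$, evaluates the conditional false-negative probability exactly via spherical-cap areas $A_n(\cdot)$, writes the unconditional probability as a $\chi^2_n$-weighted integral, splits that integral into several pieces according to the relative sizes of $\alpha^2$, $D_e$, and $D_e(1-T^2)/T^2$, and finally applies Laplace's method. You instead pose a single two-dimensional large-deviations optimization in the sufficient statistics $(\rho_\bx,\alpha^2)$ with the Gaussian rate function $I$, and the two presentations agree precisely through the identity you isolate, $\alpha^2-\rho_+^2=\alpha^2\sin^2\Psi_1(\alpha^2)$ (equivalently, $-\ln\sin^2\Psi_1$ is the conditional rate of $\rho_\bx$ given $\alpha^2$, and the $\chi^2$ exponent is the marginal rate of $\alpha^2$). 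Your framing sidesteps the integral bookkeeping and the case-splitting, at the cost of taking the plane LDP as given where the paper derives it explicitly from Wyner's geometric lemma. One slip, inconsequential for the answer: the feasible set given $\alpha^2$ is \emph{not} exactly $[\rho_-,\rho_+]$ once $\alpha^2>D_e$. In that regime $\hat{\rho}_{\bu\by}(\cdot)$ is monotone increasing on $[-\alpha,\alpha]$, the root $\rho_-$ is simply the solution of $\hat{\rho}_{\bu\by}=-T$ (an interior point of the false-negative event, not its boundary), and the feasible set is the larger interval $[-\alpha,\rho_+]$ — in particular $[-\alpha,-\sqrt{D_e})\cup[\rho_-,\rho_+]=[-\alpha,\rho_+]$, so the complementary branch does not ``merge into'' $[\rho_-,\rho_+]$. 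This leaves the inner minimizer unchanged ($\rho_\bx=0$ if $\rho_+\ge 0$, else $\rho_\bx=\rho_+$), so $E_1$ and $E_2$ come out as stated. Your argument for $E_1<E_2$ at $r=c$ (positive derivative of the moment term and vanishing derivative of $-\ln\sin^2\Psi_1$ because $\cos\Psi_1(c)=0$ while $\Psi_1'(c)$ stays bounded) coincides with the paper's.
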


Let us examine some of the properties of $E_{fn}^{ae}(\lambda,D_e)$. It is easy
to see that $E_{fn}^{ae}(\lambda,D_e) \le E_2(\lambda,D_e) =
E_{fn}^{se}(\lambda,D_e)$, i.e., the upper bound on the additive embedder
exponent serves as a lower bound on the optimal-embedder exponent. It is clear
that $E_{fn}^{ae}(\lambda,0)=0$ since $E_{fn}^{ae}(\lambda,0)\le
E_{fn}^{se}(\lambda,0)=0$. In contrast to the sign embedder, it turns out that
$E_{fn}^{ae}(0,D_e) < \infty$. To see why this is the case let us look at
\be
E_1(0,D_e)& = &\min_{r > D_e} f(r)
\ee
where $f(r)=\frac{1}{2}
\left[\frac{r}{\sigma^2}-\ln\left(\frac{r}{\sigma^2}\right)-2\ln \sin
\big(\Psi_1(r)\big)-1 \right]$. Now, since $f(r)$ is finite  for $r > D_e$, the
minimum value of $f(r)$ must be finite too. This is the case where the
threshold value equals to zero and the probability that there is an embedded
vector $\bY$ with negative correlation to $\bu$ is not zero. Clearly, for a
given $D_e$, $E_{fn}^{ae}(\lambda,D_e)=0$ for $\lambda \ge \frac{1}{2} \ln
\left(1+\frac{D_e}{\sigma^2}\right)$. Numerical calculations show that this
happens even for smaller values of $\lambda$, however, the exact smallest
value of $\lambda$ for which $E_{fn}^{ae}(\lambda,D_e)=0$ is hard to find. In
Figures~3,~4 and 5 we compare the two embedding strategies by
plotting their exponents as a functions of $\sigma^2/D_e$.

\begin{figure}[h!]
    \centering
    \psfrag{E1}[][][.65]{$E_{fn}^{ae}$}
    \psfrag{E2}[][][.65]{$E_{fn}^{se}$}
    \psfrag{L}[][][.65]{$\lambda$}
    \includegraphics[width=4in]{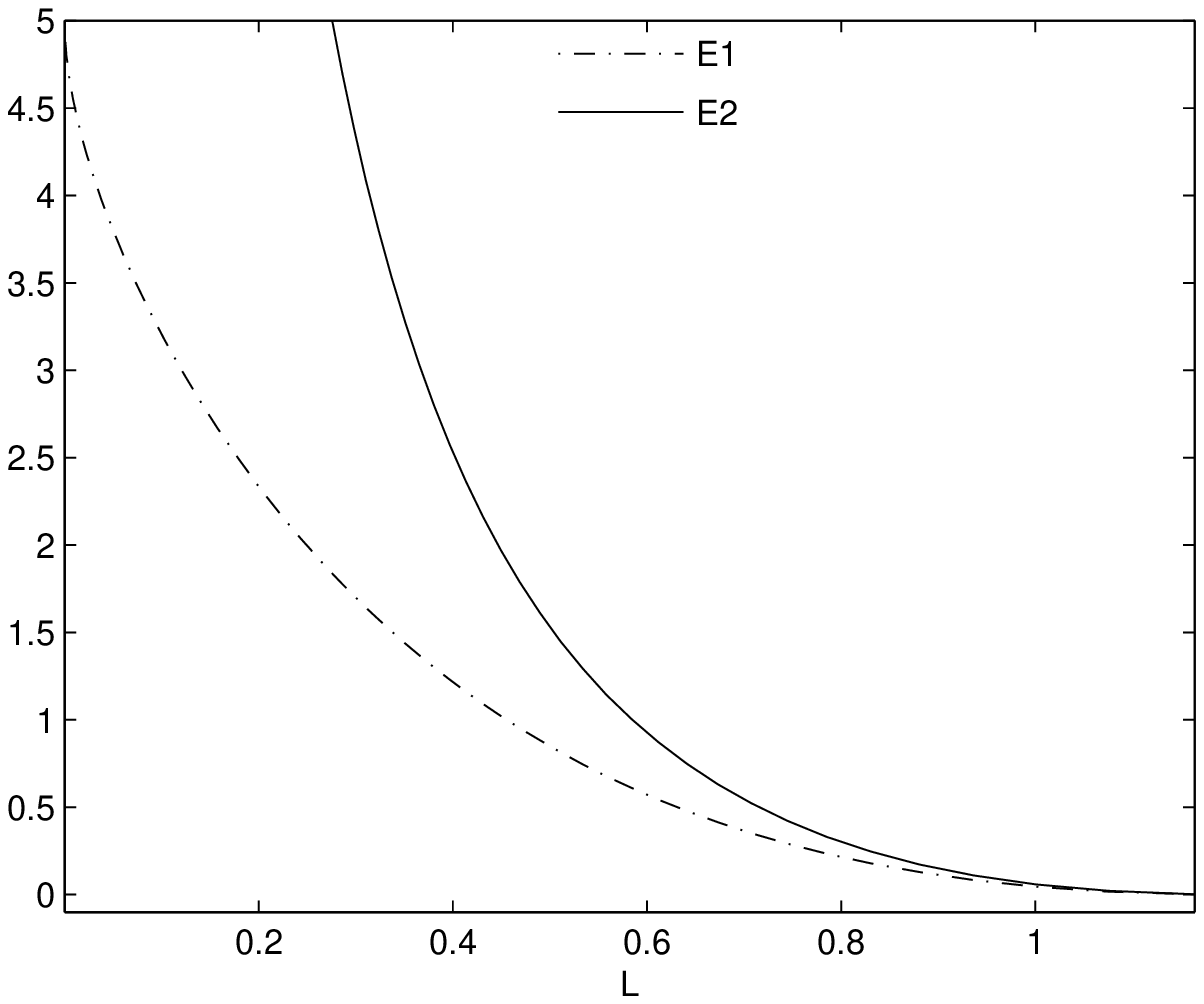}
    \label{fig2}
    \caption{Error exponents of the two embedding strategies $(\sigma^2/D=.1)$}
\end{figure}

\begin{figure}[h!]
    \centering
    \psfrag{E1}[][][.65]{$E_{fn}^{ae}$}
    \psfrag{E2}[][][.65]{$E_{fn}^{se}$}
    \psfrag{L}[][][.65]{$\lambda$}
    \includegraphics[width=4in]{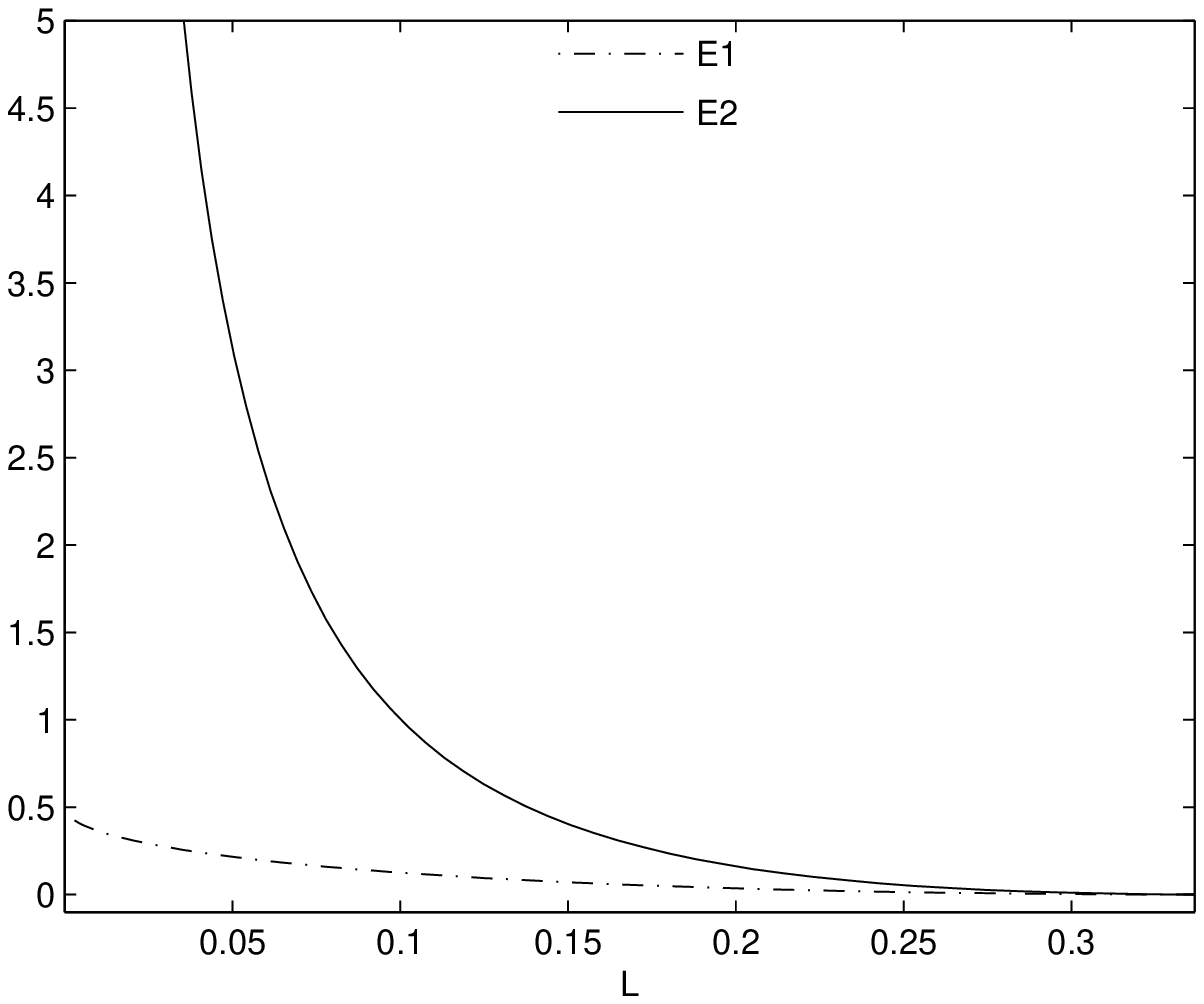}
    \label{fig3}
    \caption{Error exponents of the two embedding strategies $(\sigma^2/D=1)$}
\end{figure}

\begin{figure}[h!]
    \centering
    \psfrag{E1}[][][.65]{$E_{fn}^{ae}$}
    \psfrag{E2}[][][.65]{$E_{fn}^{se}$}
    \psfrag{L}[][][.65]{$\lambda$}
    \includegraphics[width=4in]{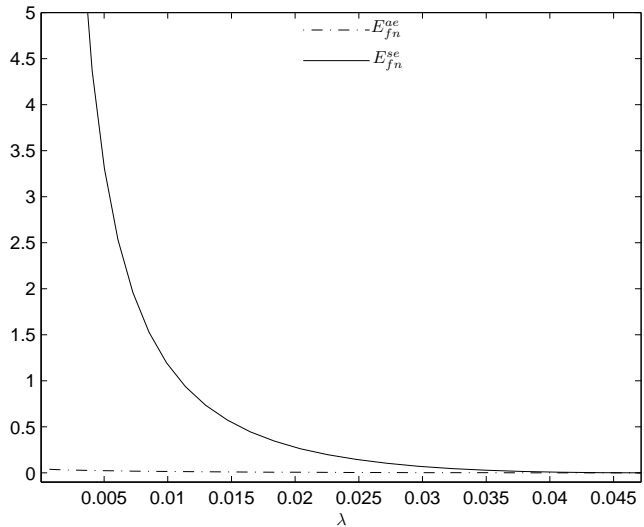}
    \label{fig4}
    \caption{Error exponents of the two embedding strategies $(\sigma^2/D=10)$}
\end{figure}

\subsection{Discussion}

When we take a closer look at the results, the fact the sign embedder achieves
a better performance should not surprise us. Clearly, when the correlation
between $\bx$ and $\bu$ is non-negative, the additive embedder and the sign
embedder achieve the same performance. However, when the correlation between
$\bx$ and $\bu$ is negative (this happens in probability $1/2$ due to the
radial symmetry of the pdf of the covertext) this is not true anymore. In this
case, the additive embedder tries to maximize the correlation $\rho$ between
the covertext $\bx$ and the watermark $\bu$ (while the detector compares the
normalized correlation $\hat{\rho}_{\by \bu}$ between $\by$ and $\bu$ to a
given threshold), however, these efforts are turned to the wrong direction.
Contrary to the additive embedding scheme, the sign embedder tries to maximize
the absolute value of the correlation $\rho$ while the detector compares the
absolute value of the normalized correlation to a given threshold. In this
case, the sign embedder tries to minimize the correlation $\rho$. This
difference is best exemplified in the case where $\lambda=0$. In this case, the
sign embedder achieves $E_{fn}^{se}(0,D_e)=\infty$ while $E_{fn}^{ae}(0,D_e)$ is
finite since the probability of embedded vectors $\bY$ for which
$\hat{\rho}_{\by \bu} < 0$ is not zero.

We note that although the sign embedder is suboptimal, it achieves a much
better performance than the additive embedder with a slight increase in its
complexity which is due to the calculation of $\sgn(\rho)$.

\section{Attacks}
\label{sec.attacks}

Let us now extend the setup to include attacks. We first discuss attacks in
general and then confine our attention to memoryless attacks. In Section~6,
we will discuss general worst-case attacks.

The case of attack is characterized by the fact that the input to the detector
is no longer the vector $\by$ as before, but another vector,
$\bz=(z_1,\ldots,z_n)$, that is the output of a channel fed by $\by$, which we
shall denote by $W_n(\bz|\by)$ as is shown in Fig.~1. For convenience, we will assume that the
components of $\bz$ take on values in the same alphabet $\calA$, which will be assumed again to be finite, as in Sections~2 and ~3. Thus, the operation of the attack, which in general may be stochastic, is thought of as a
channel. Denoting the channel output marginal by
$Q(\bz)=\sum_{\by}P_X(\by)W_n(\bz|\by)$, the analysis of this case is, in
principle, the same as before.

Assuming, for example, that $Q$ is memoryless (which is the case when both $P_X$ and $W_n$ are memoryless, i.e., $W_n(\bz|\by)=\prod_{i=1}^n W(z_i|y_i)$ for some discrete memoryless channel $W:\calA \to \calA$),
then $\Lambda_*$ is as in Section~2, except that $P_X$, $Y$, and $\by$ should be replaced by $Q$, $Z$ and $\bz$, respectively. The optimal embedder then becomes
\begin{equation}
\label{memo}
 f^*_n(\bx,\bu)=\mbox{argmin}_{\{\by:~d_e(\bx,\by)\le nD_e\}}\sum_{\bz\in
\Lambda_*^c}W_n(\bz|\by),
\end{equation}
for the redefined version of $\Lambda_*^c$ which is given by:
\be
\Lambda_*^c &=& \left\{ \bz : \ln Q(\bz)+n \hat{H}_{\bz \bu}(Z|U)+n \lambda
-|\calA| \ln(n+1) > 0 \right\}\\
&=& \left\{ \bz : -n \hat{I}_{\bz \bu}(Z;U)-n\calD\big(\hat{P}_{\bz}\| Q
\big)+n \lambda -|\calA| \ln(n+1) > 0 \right\} \; ,
\ee
where $\hat{P}_{\bz}$ is the empirical distribution of $\bz$. Evidently, eq.~\eqref{memo}
is not a convenient formula to work with. Therefore, let us try to simplify
\eqref{memo}. For a given $\by$, let us rewrite \eqref{memo} as follows:
\be
\sum_{\bz\in \Lambda_*^c}W_n(\bz|\by)&=& \sum_{T(\bz|\by,\bu) \subseteq
\Lambda_*^c} \sum_{\bz' \in
T(\bz|\by,\bu)} W_n(\bz'|\by) \nonumber \\
&=& \sum_{T(\bz|\by,\bu) \subseteq \Lambda_*^c} \big|T(\bz|\by,\bu)\big|
W_n(\bz|\by) \;\;.
\ee
It is easy to show that for a given $\bz' \in T(\bz|\by,\bu)$ and a memoryless
channel $W_n(\bz|\by)$, the probability of $\bz'$ given $\by$ is given by the
following expression:
\be
W_n(\bz'|\by) = e^{-n\left[\hat{H}_{\by \bz}(Z|Y) + \sum_{a \in \calA}
\hat{P}_{\by}(a) \calD\big(\hat{P}_{\by\bz}(Z|Y=a)\|W(Z|Y=a)\big)\right]}.
\ee
Using the fact that the cardinality of $T(\bz|\by,\bu)$ is given by
\be
|T(\bz|\by,\bu)| \doteq e^{n \hat{H}_{\bu \by \bz}(Z|Y,U)},
\ee
we conclude that $f_n^*(\bx,\bu) \in T^*(\by|\bx,\bu)$, where $T^*(\by|\bx,\bu)$
corresponds to the following conditional empirical distribution:
\begin{multline}
\hat{P}^*_{\bu\bx\by}(Y|X,U) = \arg\max_{\begin{subarray}{l}\hat{P}_{\bu\bx\by}(Y|X,U):\\
\hat{E}_{\bx\by}d_e(X,Y)\le D_e \end{subarray}}
\;  \Bigg\{ \min_{\begin{subarray}{l}\hat{P}_{\bu\by\bz}(Z|Y,U): \\
\hat{I}_{\bu\bz}(Z;U)+\calD(\hat{P}_{\bz}\|Q) \le \lambda \end{subarray}}
\Big[\hat{I}_{\bu\by\bz}(Z;U|Y)\\+\sum_{a \in \calA} \hat{P}_{\by}(a)
\calD\big(\hat{P}_{\by\bz}(Z|Y=a) \big\| W(Z|Y=a)\big) \Big]\Bigg\}
\end{multline}
i.e., for a given $\bu$ and $\bx$, we search for the empirical distribution
$\hat{P}_{\bu\bx\by}(Y|X,U)$ which maximizes the exponent of the false negative
probability dictated by the dominating conditional type $T(\bz|\by,\bu)$ in
$\Lambda^c_*$. Once the optimal empirical distribution
$\hat{P}^*_{\bu\bx\by}(Y|X,U)$ has been found, it does not matter which vector
$\by$ is chosen from the corresponding conditional type $T^*(\by|\bx,\bu)$.

\section{General Attack Channel}
\label{sec.gen.attack}

In this section we extend the results of the previous sections to include general attack channels subject to a distortion criterion.

Consider a covertext sequence $\bx=(x_1,x_2,\ldots,x_n) \in \calX^n$ emitted
from a memoryless source $P_X$ as before.
Let $d_a: \calY \times \calZ \to \reals_+$ denote another bounded single-letter
distortion measure. An attacker subject to distortion level $D_a$ w.r.t.
$d_a$ is a channel $W_n$, fed by a stegotext $\by$ and which
produces a forgery $\bz$ such that
\be
\label{attack_dist}
d_a(\by,\bz) \eqde \sum_{i=1}^n d_a(y_i,z_i) \le nD_a \quad \forall (\by,\bz) \in \calA \times \calA.
\ee
We denote the set of attack channels which satisfy \eqref{attack_dist} by
$\calW_n(D_a)$.

For a given $\bu$, we would like to devise a decision rule that partitions the
space $\calA^n$ of sequences $\{\bz\}$, observed by the detector, into two
complementary regions, $\Lambda$ and $\Lambda^c$, such that for $\bz \in
\Lambda$, we decide in favor of $H_1$ (watermark $\bu$ is present) and for $\bz
\in \Lambda^c$, we decide in favor of $H_0$ (watermark absent: $\by=\bx$).
Consider the Neyman-Pearson criterion of minimizing the worst-case false
negative probability
\begin{equation}
\label{Pfn_main}
P_{fn} \eqde \max_{W_n \in \calW_n(D_a)} P_{fn}\big(f_n,\Lambda,W_n\big)
\end{equation}
where
\be P_{fn}\big(f_n,\Lambda,W_n\big) \eqde \sum_{\bz \in \Lambda^c}
\left[\sum_{\by \in \calA^n} \left( \sum_{\bx: f_n(\bx,\bu)=\by}  P_X(\bx)
\right) W_n(\bz|\by) \right] \; ,
\ee
and $P_X(\bx)=\prod^n_{i=1}P_X(x_i)$, subject to the following constraints:
\begin{itemize}
\item[(1)] The distortion
between $\bx$ and $\by$ does not exceed $nD_e$.
\item [(2)]
The false positive probability is upper bounded by
\begin{equation}
\label{fac}
P_{fp} \ebd \max_{W_n \in \calW_n(D_a)} P_{fp}\big(\Lambda,W_n \big)   \le
e^{-n \lambda} \;,
\end{equation}
where $\lambda > 0$ is a prescribed constant and
\be
P_{fp}\big(\Lambda,W_n\big) \eqde \sum_{\bz\in\Lambda} \left(\sum_{\by \in
\calA^n} P_X(\by)W_n(\bz|\by) \right).
\ee
\end{itemize}
In other words, we would like to choose an embedder $f_n$ and a decision region
$\Lambda$ so as to minimize $P_{fn}$ subject to a distortion constraint (between
the covertext and the stegotext) and the constraint that the exponential decay
rate of $P_{fp}$ would be at least as large as $\lambda$, for \emph{any} attack
channel in $\calW_n(D_a)$.

Similarly as in Section~\ref{sec.Basic.Derivation}, we focus on the class of detectors
which base their decisions on the empirical joint distribution of $\bz$ and $\bu$.

\subsection{Strongly Exchangeable Attack Channels}
\label{subsec.strongly.exchageable}

First, we restrict the set of attack channels to be strongly exchangeable channels
(the exact definition will be given in the sequel). Later, this restriction will be dropped,
and the attack channel will be allowed to be any member of $\calW_n(D_a)$. However, in this case
random watermarks (rather than deterministic ones) must be considered.

The use of strongly exchangeable channels in the context of general attack channels was proposed in \cite{SomekhMerhav03}, where Somekh-Baruch and Merhav showed (in another context) that the worst strongly exchangeable attack channel is as bad as the worst general attack channel, while strongly exchangeable channels are much easier to analyze. In the sequel, we will adjust the proof technique proposed in \cite{SomekhMerhav03} to fit our needs.

\begin{definition}
A strongly exchangeable channel $W_n$ is one that satisfies for all $\by \in
\calA^n, \bz \in \calA^n$
$$
W_n(\bz'|\by')=W_n(\bz|\by), \qquad \forall (\by',\bz') \in T(\by,\bz) \;.
$$
\end{definition}
Denote the set of all strongly exchangeable channels that operate on
$n$-tuples by $\calC_n^{ex}$ and let $\calW_n^{ex}(D_a)=\calW_n(D_a) \cap
\calC^{ex}_n$.

Define
\be
W_n^*(\bz|\by)=\frac{c_n(\by)}{|T(\bz|\by)|} \mathbbm{1} \left\{d_a(\by,\bz)
\le nD_a \right\} \;,
\ee
where, $c_n(\by)=\left[\sum_{\bz: d_a(\by,\bz) \le nD_a} \frac{1}{|T(\bz|\by)|}
\right]^{-1}$ \cite[p.~543]{SomekhMerhav03}. Clearly, $W_n^* \in \calW_{n}^{ex}(D_a)$.
Note that $c_n(\by)$ equals to the reciprocal of the number of conditional types $T(\bz|\by)$ such that $d_a(\by,\bz) \le nD_a$ \cite[p.~543]{SomekhMerhav03} which implies that  $(n+1)^{-\calA^2} \le c(\by) \le 1$. Hence, $c_n(\by)$ is at most polynomial in $n$.

Define
\be
\Lambda_*=\left\{\bz\; : \;  \hat{I}_{\bz\bu}(Z;U) + \min_{\hat{P}_{\by}:
\hat{E}_{\by \bz} d_a(Y,Z) \le D_a} \calD\big(\hat{P}_{\by} \| P_X\big) \ge
\frac{{|\calA|}\ln(n+1)}{n}+\lambda \right\}\;.
\ee

\begin{lemma}
\label{Lambda_DW}
\begin{itemize}
\item [(i)] For every $W_n \in \calW_n^{ex}(D_a)$,
$$
P_{fp}(\Lambda_*,W_n) \le e^{-n ( \lambda - \delta_n)}
$$
where $\lim_{n \to \infty} \delta_n =0$.

\item [(ii)] For any $\Lambda \subseteq \calA^n$ that satisfies
$$
P_{fp}(\Lambda,W_n) \le e^{-n \lambda'} \qquad  \forall W_n \in
\calW_n^{ex}(D_a)
$$
for some $\lambda'>\lambda$, then $\Lambda_*^c \subseteq \Lambda^c$ for all
sufficiently large $n$.
\end{itemize}
\end{lemma}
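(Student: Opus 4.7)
The plan is to lift the type-class argument from Theorem~1 to the output distribution $Q(\bz) \eqde \sum_\by P_X(\by) W_n(\bz|\by)$ induced by a strongly exchangeable attack. The central structural fact I will exploit is that, by strong exchangeability, $W_n(\cdot|\by)$ is constant on each conditional type $T(\bz|\by)$, so summing over $T(\bz|\by)$ forces $W_n(\bz|\by) \le 1/|T(\bz|\by)|$; moreover, since $W_n \in \calW_n^{ex}(D_a)$, the support of $W_n(\cdot|\by)$ is contained in $\{\bz : d_a(\by,\bz) \le nD_a\}$. These two facts will let me control $Q(\bz)$ using only the quantities appearing in the definition of $\Lambda_*$, namely $\hat{H}_\bz(Z)$ and $\Phi(\bz) \eqde \min_{\hat{P}_\by:\, \hat{E}_{\by\bz} d_a(Y,Z) \le D_a} \calD(\hat{P}_\by\|P_X)$, while the specific channel $W_n^*$ defined just above the lemma will realize this bound up to polynomial factors, yielding the converse.

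For part~(i), I first derive a pointwise upper bound on $Q(\bz)$. Starting from
$$ Q(\bz) \le \sum_{\by:\, d_a(\by,\bz) \le nD_a} \frac{P_X(\by)}{|T(\bz|\by)|}, $$
I group $\by$'s by joint type with $\bz$, substitute $P_X(\by) \doteq \exp\{-n[\hat{H}_\by(Y) + \calD(\hat{P}_\by\|P_X)]\}$ and $|T(\bz|\by)| \doteq \exp\{n\hat{H}_{\by\bz}(Z|Y)\}$, and invoke the polynomial bound on the number of joint types to conclude $Q(\bz) \dotle \exp\{-n[\hat{H}_\bz(Z) + \Phi(\bz)]\}$. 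Since $\hat{H}_\bz$ and $\Phi$ depend on $\bz$ only through $\hat{P}_\bz$, they are constant on $T(\bz|\bu)$; summing over a conditional type $T(\bz'|\bu) \subseteq \Lambda_*$ and using $|T(\bz|\bu)| \le \exp\{n\hat{H}_{\bu\bz}(Z|U)\}$ gives
$$ \sum_{\bz' \in T(\bz|\bu)} Q(\bz') \dotle \exp\{-n[\hat{I}_{\bu\bz}(Z;U) + \Phi(\bz)]\} \le (n+1)^{-|\calA|} e^{-n\lambda}, $$
by the defining property of $\Lambda_*$. Summing over the polynomially many conditional type classes inside $\Lambda_*$ then yields $P_{fp}(\Lambda_*, W_n) \le e^{-n(\lambda - \delta_n)}$ uniformly in $W_n \in \calW_n^{ex}(D_a)$, with $\delta_n \to 0$.

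For part~(ii), I instantiate the channel $W_n^*$. Because $W_n^*$ spreads its mass uniformly over each conditional type inside the distortion ball, it achieves the pointwise bound of part~(i) up to polynomial factors: picking a $\by^*$ in the joint type that attains $\Phi(\bz)$, the bounds $c_n(\by^*) \ge (n+1)^{-|\calA|^2}$, $|T(\bz|\by^*)| \doteq \exp\{n\hat{H}_{\by^*\bz}(Z|Y)\}$, $|T(\by^*|\bz)| \doteq \exp\{n\hat{H}_{\by^*\bz}(Y|Z)\}$, and $P_X(\by^*) \doteq \exp\{-n[\hat{H}_{\by^*}(Y) + \Phi(\bz)]\}$ combine to give $Q^*(\bz) \dotge \exp\{-n[\hat{H}_\bz(Z) + \Phi(\bz)]\}$, and hence $\sum_{\bz' \in T(\bz|\bu)} Q^*(\bz') \dotge \exp\{-n[\hat{I}_{\bu\bz}(Z;U) + \Phi(\bz)]\}$. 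Now suppose, toward a contradiction, that there is some $\bz \in \Lambda_*^c \cap \Lambda$. Since the detector depends only on $\hat{P}_{\bu\bz}$, the whole conditional type $T(\bz|\bu)$ lies in $\Lambda$, and by the definition of $\Lambda_*^c$ the exponent $\hat{I}_{\bu\bz}(Z;U) + \Phi(\bz)$ is strictly below $\lambda$, up to a $o(1)$ correction. Consequently $P_{fp}(\Lambda, W_n^*) > e^{-n\lambda'}$ for all sufficiently large $n$, contradicting the hypothesis that $P_{fp}(\Lambda, W_n) \le e^{-n\lambda'}$ for every $W_n \in \calW_n^{ex}(D_a)$.

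The main technical obstacle will be the careful bookkeeping of the polynomial prefactors—$(n+1)^{\pm|\calA|^2}$ slack in each type-class cardinality estimate, the $c_n(\by)$ factor, and the number of joint types—to verify that they all dissolve into the vanishing $\delta_n$ in part~(i) and into the strict gap $\lambda' - \lambda$ in part~(ii). A secondary point is to clarify that the minimization defining $\Phi(\bz)$ ranges over joint empirical distributions $\hat{P}_{\by\bz}$ with $\bz$-marginal equal to $\hat{P}_\bz$ and with $\hat{E}_{\by\bz} d_a(Y,Z) \le D_a$, so that the implicit minimum in the upper bound of part~(i) and the optimal $\by^*$ in the lower bound of part~(ii) refer to the same object and are therefore matched to within a polynomial factor, as needed.
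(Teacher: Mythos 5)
Your proof is correct and follows essentially the same route as the paper: for part~(i) you bound $W_n(\bz|\by) \le \mathbbm{1}\{d_a(\by,\bz)\le nD_a\}/|T(\bz|\by)|$ for any strongly exchangeable $W_n$ and evaluate the resulting channel-independent bound on the output marginal by grouping over joint types, and for part~(ii) you instantiate $W_n^*$ to show this bound is tight up to polynomial factors, so any admissible $\Lambda$ can only contain conditional type classes lying in $\Lambda_*$. The only cosmetic differences are the order of the two parts and your use of a contrapositive framing for part~(ii) where the paper argues directly.
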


\begin{proof}
Let $T(\bz|\bu)\subseteq \Lambda$. Then, we have
\begin{eqnarray}
\label{fp_sec1}
e^{-n \lambda} &\geq& \max_{W_n \in \calW_n^{ex}(D_a)} P_{fp}(\Lambda,W_n) \nonumber \\
&=& \max_{W_n \in \calW_n^{ex}(D_a)} \sum_{\bz \in \Lambda}\left( \sum_{\by \in \calA^n}P_X(\by)W_n(\bz|\by) \right) \nonumber\\
&\geq& \sum_{\bz \in \Lambda}\left( \sum_{\by \in \calA^n}P_X(\by)W_n^*(\bz|\by) \right) \nonumber\\
&=& \sum_{T(\bz|\bu) \subseteq \Lambda} \sum_{\bz' \in T(\bz|\bu)} \left(
\sum_{\by \in \calA^n}P_X(\by)W_n^*(\bz'|\by) \right) \nonumber\\
&=& \sum_{T(\bz|\bu) \subseteq \Lambda} \sum_{\bz' \in T(\bz|\bu)} Q^*(\bz') \;,
\end{eqnarray}
where $Q^*(\bz) \eqde \sum_{\by \in \calA^n}P_X(\by)W_n^*(\bz|\by)$. Now,
\begin{eqnarray}
\label{Q_bound}
Q^*(\bz) &=& \sum_{\by \in \calA^n}P_X(\by)W_n^*(\bz|\by) \nonumber \\
&=& \sum_{T(\by|\bz) \subset \calA^n} \sum_{\by' \in T(\by|\bz)} P_X(\by')W_n^*(\bz|\by') \nonumber\\
         &=& \sum_{T(\by|\bz)\subset \calA^n} \sum_{\by' \in T(\by|\bz)} P_X(\by')\frac{c_n(\by')}{|T(\bz|\by)|} \ind\{d_a(\by',\bz) \le nD_a\} \nonumber\\
    &\doteq& \sum_{T(\by|\bz)\subset \calA^n} |T(\by|\bz)|e^{-n\left[\hat{H}_{\by}(Y)+\calD(\hat{P}_{\by}\|P_X)\right]}e^{-n\hat{H}_{\by\bz}(Z|Y)} c_n(\by)\ind\{d_a(\by,\bz) \le nD_a\} \nonumber\\
    &\doteq& \sum_{T(\by|\bz)\subset \calA^n} e^{-n\left[\hat{H}_{\by}(Y)+\calD(\hat{P}_{\by}\|P_X)-\hat{H}_{\by\bz}(Y|Z)+\hat{H}_{\by\bz}(Z|Y)\right]}c_n(\by)\ind\{d_a(\by,\bz) \le nD_a\} \nonumber\\
    &\doteq& \exp\left\{-n \left[\hat{H}_{\bz}(Z)+ \min_{\hat{P}_{\by}: \hat{E}_{\by \bz}d_a(Y,Z) \le D_a} \calD(\hat{P}_{\by} \| P_X) \right] \right\} \;,
\end{eqnarray}
where the last equality stems from the fact that $c_n(\by)$ is polynomial in $n$.

Clearly, for any $\bz' \in T(\bz)$ the following holds
\be
Q(\bz)&=& \sum_{\by \in \calA^n} P_X(\by) W_n(\bz|\by) \nonumber\\
&=& \sum_{\pi(\by)} P_X(\pi(\by)) W_n(\pi(\bz)|\pi(\by)) \nonumber\\
&=& Q(\pi(\bz)) \nonumber\\
&=& Q(\bz') \;,
\ee
where the second equality is because $W_n \in \calW_n^{ex}(D_a)$ and $\pi(\cdot)$ is a permutation of $\{1,\ldots,n\}$ such that $\bz'=\pi(\bz)$.
Hence $Q^*(\bz')=Q^*(\bz)\;\; \forall \bz' \in T(\bz)$. Following \eqref{fp_sec1}, we get
\be
e^{-n \lambda} &\ge& \sum_{T(\bz|\bu) \subseteq \Lambda} |T(\bz|\bu)| Q^*(\bz) \nonumber \\
&\ge& |T(\bz|\bu)| Q^*(\bz) \nonumber \\
&\ge& |T(\bz|\bu)| \exp\left\{-n \left[ \hat{H}_{\bz}(Z) + \min_{\hat{P}_{\by}:
\hat{E}_{\by \bz} d_a(Y,Z) \le D_a} \calD(\hat{P}_{\by} \| P_X) \right] \right\}\nonumber \\
&\ge& \exp\left\{-n\left[\hat{H}_{\bz}(Z) - n\hat{H}_{\bz\bu}(Z|U) +\min_{\hat{P}_{\by}:
\hat{E}_{\by \bz}
d_a(Y,Z) \le D_a} \calD(\hat{P}_{\by} \| P_X) \right] \right\} (n+1)^{-|\calA|} \nonumber \\
&=& \exp\left\{-n\left[\hat{I}_{\bz\bu}(Z;U) +\min_{\hat{P}_{\by}:
\hat{E}_{\by \bz}
d_a(Y,Z) \le D_a} \calD(\hat{P}_{\by} \| P_X) \right] \right\} (n+1)^{-|\calA|} \;.
\ee
In the same spirit as in the attack-free scenario, we have shown that every $T(\bz|\bu)$ in $\Lambda$ is also in $\Lambda_*$. Therefore,
$\Lambda_*^c\subseteq \Lambda^c$ and so the probability of $\Lambda_*^c$ is
smaller than the probability of $\Lambda^c$, i.e., $\Lambda_*^c$ minimizes
$P_{fn}$ among all $\Lambda^c$ corresponding to detectors that satisfy
(\ref{fac}). It remains to show
that $\Lambda_*$ itself has a false positive exponent which is at least
as large as $\lambda$ for sufficiently large $n$.

Clearly, for any attack channel $W_n \in \calW_n^{ex}(D_a)$,
\be
\label{worstch}
W_n(\bz|\by)&=& \frac{\sum_{\bz' \in T(\bz|\by)} W_n(\bz'|\by) }{|T(\bz|\by)|} \nonumber \\
&=& \frac{W\big(T(\bz|\by)|\by \big)}{|T(\bz|\by)|} \nonumber \\
&\le& \frac{1}{|T(\bz|\by)|} \ind\{d_a(\by,\bz) \le
nD_a\} \;,
\ee
where the first equality is because $W_n(\bz'|\by) = W_n(\bz|\by) \;\; \forall \bz' \in T(\bz|\by)$.
Moreover, similarly as in \eqref{Q_bound}, combined with the fact that $c(\by)$ is polynomial in $n$ implies that
\be
\label{Q_bound2}
\sum_{\by \in \calA^n} P_X(\by) \frac{\ind\big\{d_a(\by,\bz') \le nD_a\big\}}{|T(\bz'|\by)|} \doteq
\exp\left\{-n \left[\hat{H}_{\bz}(Z)+ \min_{\hat{P}_{\by}: \hat{E}_{\by \bz}d_a(Y,Z) \le D_a} \calD(\hat{P}_{\by} \| P_X) \right] \right\} \;.
\ee

Using \eqref{worstch} and \eqref{Q_bound2}, it follows that $\Lambda_*$ indeed fulfills the false-positive constraint for any attack channel $W_n \in \calW_n^{ex}(D_a)$:
\be
\max_{W_n \in \calW_n^{ex}(D_a)} P_{fp}(\Lambda_*,W_n)
&=& \max_{W_n \in \calW_n^{ex}(D_a)} \sum_{\bz \in \Lambda_*} \left( \sum_{\by \in \calA^n} P_X(\by) W_n(\bz|\by) \right) \nonumber \\
&\le& \sum_{T(\bz|\bu) \subseteq \Lambda_*} \sum_{\bz' \in T(\bz|\bu)} \left( \sum_{\by \in \calA^n} P_X(\by) \frac{\ind\big\{d_a(\by,\bz') \le nD_a\big\}}{|T(\bz'|\by)|} \right) \nonumber \\
%
%
&=& \sum_{T(\bz|\bu) \subseteq \Lambda_*} \sum_{\bz' \in T(\bz|\bu)} \left[ \exp\left\{-n \left(\hat{H}_{\bz}(Z)+ \min_{\hat{P}_{\by}:  \hat{E}_{\by \bz}d_a(Y,Z) \le D_a} \calD(\hat{P}_{\by} \| P_X) \right) \right\} \right] \nonumber \\
&=& \sum_{T(\bz|\bu) \subseteq \Lambda_*} e^{n \hat{H}_{\bu\bz}(Z|U)} \left[ \exp\left\{-n \left(\hat{H}_{\bz}(Z)+ \min_{\hat{P}_{\by}: \hat{E}_{\by \bz}d_a(Y,Z) \le D_a} \calD(\hat{P}_{\by} \| P_X) \right) \right\} \right] \nonumber \\
&\dotle& \sum_{T(\bz|\bu) \subseteq \Lambda_*} \exp\left\{-n\hat{I}_{\bu\bz}(Z;U)\right\} \exp\left\{-n \min_{\hat{P}_{\by}: \hat{E}_{\by \bz}d_a(Y,Z) \le D_a} \calD(\hat{P}_{\by} \| P_X)\right\} \nonumber\\
&\le& (n+1)^{|\calA|} e^{-n \lambda} \nonumber \\
&\doteq& e^{-n (\lambda-\delta_n)} \;,
\ee
where $\delta_n=\frac{|\calA| \ln(n+1)}{n} \to 0$ as $n \to \infty$.
\end{proof}

Our next step is to find an embedder which minimizes the probability of false
negative under the given decision region for any attack channels $W_n \in
\calW_n^{ex}(D_a)$. Following Section~\ref{sec.attacks}, the optimal embedder
can be written as follows:
\be
f_n^*(\bx,\bu) = \arg \min_{\by: d_e(\bx,\by)\le nD_e} \max_{W_n \in
\calW_n^{ex}(D_a)} \sum_{\bz \in \Lambda_*^c} W_n(\bz|\by) \;.
\ee

\begin{lemma}
\label{Th_opt_emb}
For any attack channel $W_n \in
\calW_n^{ex}(D_a)$, the optimal embedder $f_n^*$ which minimizes the false-negative probability
can be expressed in the following manner:
\be
\label{opt_emb1}
f_n^*(\bx,\bu)=\by, \qquad \by \in T^*(\by|\bx,\bu)
\ee
where $T^*(\by|\bx,\bu)$ corresponds to the following conditional empirical
distribution:
\begin{multline}
\label{opt_emb2}
\hat{P}_{\bu\bx\by}(Y|X,U)= \arg \max_{\begin{subarray}{l}\hat{P}_{\bu\bx\by}(Y|X,U):\\
\hat{E}_{\bx\by}d_e(X,Y)\le D_e \end{subarray}}
\;  \Bigg\{ \min_{\begin{subarray}{c}\hat{P}_{\bu\by\bz}(Z|Y,U): \\
\hat{I}_{\bu\bz}(Z;U)+\min_{\hat{P}_{\by}(Y):\hat{E}_{\by}d_a(Y,Z)\le D_a}
D(\hat{P}_{\by}\|P_X) < \lambda
\end{subarray}} \hat{I}_{\bu\by\bz}(Z;U|Y) \Bigg\} \;.
\end{multline}
\end{lemma}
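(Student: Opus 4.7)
The plan is to reduce the worst-case false-negative probability for a fixed $\by=f_n(\bx,\bu)$ to a sum over conditional type classes of $\bz$ given $(\by,\bu)$, evaluate its exponential order via the method of types, and then optimize over $\by$ subject to the covertext distortion constraint. For any $W_n\in\calW_n^{ex}(D_a)$, strong exchangeability together with the feasibility constraint $d_a(\by,\bz)\le nD_a$ yields the pointwise upper bound $W_n(\bz|\by)\le \ind\{d_a(\by,\bz)\le nD_a\}/|T(\bz|\by)|$ already established in \eqref{worstch}. The channel $W_n^*$ of Lemma~\ref{Lambda_DW} lies in $\calW_n^{ex}(D_a)$ and saturates this bound up to the polynomial factor $c_n(\by)$, uniformly in $\by$. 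Consequently,
\be
\max_{W_n\in\calW_n^{ex}(D_a)}\sum_{\bz\in\Lambda_*^c}W_n(\bz|\by)\;\doteq\;\sum_{\bz\in\Lambda_*^c\,:\,d_a(\by,\bz)\le nD_a}\frac{1}{|T(\bz|\by)|}
\ee
in the exponential scale, so the max over attack channels may be pulled out of the outer average over $\bx$ and replaced by this explicit sum for each $\bx$ separately.

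The next step is to partition the resulting sum by the finer conditional type classes $T(\bz|\by,\bu)$. Standard method-of-types estimates give $|T(\bz|\by,\bu)|\doteq e^{n\hat{H}_{\bu\by\bz}(Z|Y,U)}$ and $|T(\bz|\by)|\doteq e^{n\hat{H}_{\by\bz}(Z|Y)}$, so every such conditional type contributes $|T(\bz|\by,\bu)|/|T(\bz|\by)|\doteq e^{-n\hat{I}_{\bu\by\bz}(Z;U|Y)}$. Because only polynomially many conditional types appear, the exponential rate of the sum is governed by the dominating (minimizing) type, so
\be
\max_{W_n\in\calW_n^{ex}(D_a)}\sum_{\bz\in\Lambda_*^c}W_n(\bz|\by)\;\doteq\;\exp\Bigl\{-n\min\hat{I}_{\bu\by\bz}(Z;U|Y)\Bigr\},
\ee
where the minimum is over conditional empirical distributions $\hat{P}_{\bu\by\bz}(Z|Y,U)$ whose induced $\bz$ lies in $\Lambda_*^c$ and satisfies $\hat{E}_{\by\bz}d_a(Y,Z)\le D_a$. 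As $n\to\infty$, the $|\calA|\ln(n+1)/n$ correction in the definition of $\Lambda_*$ is absorbed, so that membership of the induced $\bz$ in $\Lambda_*^c$ becomes asymptotically equivalent to the inequality $\hat{I}_{\bu\bz}(Z;U)+\min_{\hat{P}_{\by}(Y):\hat{E}_{\by}d_a(Y,Z)\le D_a}\calD(\hat{P}_{\by}\|P_X)<\lambda$ that appears in the constraint set of \eqref{opt_emb2}.

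Finally, the right-hand side of the last display depends on $(\bx,\bu,\by)$ only through the joint empirical distribution $\hat{P}_{\bu\bx\by}$, and the covertext distortion $d_e(\bx,\by)/n$ depends only on $\hat{P}_{\bx\by}$. Therefore, for each $\bx$, minimizing the worst-case false-negative contribution of $\by$ over $\{\by:d_e(\bx,\by)\le nD_e\}$ reduces to maximizing the inner minimum of $\hat{I}_{\bu\by\bz}(Z;U|Y)$ over feasible conditional empirical distributions $\hat{P}_{\bu\bx\by}(Y|X,U)$, which is exactly \eqref{opt_emb2}; any $\by$ drawn from the corresponding conditional type class $T^*(\by|\bx,\bu)$ attains the optimum, as asserted in \eqref{opt_emb1}. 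The main technical obstacle is justifying the interchange of the max over $W_n$ and the sum over $\bx$ and $\bz$ on the exponential scale, namely that $W_n^*$ realizes the worst case uniformly over all $\by$ and every contributing conditional type; one must also carefully track the polynomial prefactors (both $c_n(\by)$ and the $|\calA|\ln(n+1)/n$ correction in $\Lambda_*^c$) to verify they disappear in the exponent, and handle infeasible constraint sets via the convention $\min\emptyset=+\infty$, corresponding to a zero false-negative contribution.
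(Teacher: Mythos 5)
Your proposal is correct and follows essentially the same route as the paper's own proof: decompose the false-negative integral by conditional type classes $T(\bz|\by,\bu)\subseteq\Lambda_*^c$, apply the strong-exchangeability bound $W_n(\bz|\by)\le\ind\{d_a(\by,\bz)\le nD_a\}/|T(\bz|\by)|$ from \eqref{worstch}, invoke the type-cardinality estimates, and let the dominating conditional type govern the exponent. In fact you are somewhat more careful than the published argument, which only writes out the chain of upper bounds: you explicitly observe that $W_n^*\in\calW_n^{ex}(D_a)$ saturates the bound up to the polynomial factor $c_n(\by)$ uniformly in $\by$, which is what justifies upgrading the final inequality to an exponential equality and what licenses interchanging the maximization over $W_n$ with the summation over $\bx$ and $\bz$; you also flag the absorption of the $|\calA|\ln(n+1)/n$ slack and the $\min\emptyset=\infty$ convention, all of which the paper leaves implicit.
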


\begin{proof}
For a given $\by \in \calA^n$,
\be
\max_{W_n \in \calW_n^{ex}(D_a)} \sum_{\bz \in \Lambda_*^c} W_n(\bz|\by)
&=& \max_{W_n \in \calW_n^{ex}(D_a)} \sum_{T(\bz|\by,\bu) \subseteq \Lambda_*^c} \sum_{\bz' \in T(\bz|\by,\bu)} W_n(\bz|\by) \nonumber \\
&\le& \sum_{T(\bz|\by,\bu) \subseteq \Lambda_*^c} \sum_{\bz' \in T(\bz|\by,\bu)} |T(\bz|\by)|^{-1}\ind\{d_a(\by,\bz') \le nD_a\} \nonumber \\
&\le& \sum_{T(\bz|\by,\bu) \subseteq \Lambda_*^c} |T(\bz|\by,\bu)|\cdot |T(\bz|\by)|^{-1}\ind\{d_a(\by,\bz') \le nD_a\} \nonumber \\
&\doteq& \max_{T(\bz|\by,\bu) \subseteq \Lambda_*^c} e^{-n
\hat{I}_{\bu\by\bz}(Z;U|Y)} \;.
\ee
Therefore $f_n^*(\bx,\bu) \in T^*(\by|\bx,\bu)$, where $T^*(\by|\bx,\bu)$
corresponds to the conditional empirical distribution \eqref{opt_emb2}.
\end{proof}

Note that the optimal embedder and the optimal decision rule correspond to
the case where the detector and the embedder are tuned to the worst possible channel $W_n^*$.
To extend the above results to general attack channels (i.e., channels that are members of
$\calW_n(D_a)$ rather than $\calW_n^{ex}(D_a)$) we must consider the random watermark setting (cf.\ Subsection~\ref{subsec.randomWM}). The reason for this will be made clear in the sequel.

\subsection{Random Watermarks and General Attack Channels}

In the spirit of Subsection~\ref{subsec.randomWM}, from this point on, we will use the model in which $\bu$ is random as well, in particular, being drawn from another source $P_U$,
independently of $\bx$, normally, the binary symmetric source (BSS).
In this case, the decision regions $\Lambda$ and $\Lambda^c$ will be defined as
subsets of $\calA^n\times\calB^n$ and the probabilities of error $P_{fn}$ and
$P_{fp}$ will be defined, again, as the corresponding summations of
products $P_X(\bx)P_U(\bu)$.

The corresponding version of $\Lambda_*$, proposed for strongly exchangeable attack,
channels would be:
\begin{multline}
\label{Lambda_RW}
\Lambda_{**} \eqde \left\{(\bz,\bu)  : \:  \hat{I}_{\bz\bu}(Z;U) +
\calD\big(\hat{P}_{\bu}\|P_U\big) + \min_{\hat{P}_{\by}: \hat{E}_{\by \bz}
d_a(Y,Z) \le D_a} \calD\big(\hat{P}_{\by} \| P_X\big) \ge
\frac{|\calA|\ln(n+1)}{n}+\lambda \right\} \;.
\end{multline}

\begin{theorem}
\begin{itemize}
\item [(i)] For every $W_n \in \calW_n(D_a)$,
$$
P_{fp}(\Lambda_{**},W_n) \le e^{-n (\lambda-\delta_n)} \;,
$$
where $\lim_{n \to \infty} \delta_n =0$.
\item [(ii)] For any $\Lambda \subseteq \calA^n \times \calB^n$ that satisfies
$$
P_{fp}(\Lambda,W_n) \le e^{-n \lambda'} \qquad  \forall W_n \in \calW_n(D_a)
$$
for some $\lambda'>\lambda$, then $\Lambda_{**}^c \subseteq \Lambda^c$ for all
sufficiently large $n$.
\end{itemize}
\end{theorem}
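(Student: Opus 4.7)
The plan is to reduce the general-attack case to the strongly exchangeable case already handled in Lemma~\ref{Lambda_DW}, exploiting the fact that $\Lambda_{**}$, $P_X$, and $P_U$ are all invariant under the diagonal action of the symmetric group on $(\bu,\by,\bz)$.

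For part~(ii), I would essentially rerun the proof of Lemma~\ref{Lambda_DW}(ii), with conditional types $T(\bz|\bu)$ replaced by joint types $T(\bu,\bz)$. Since $W_n^*\in\calW_n^{ex}(D_a)\subseteq\calW_n(D_a)$, any decision region $\Lambda$ satisfying the false-positive constraint for all of $\calW_n(D_a)$ satisfies it in particular for $W_n^*$. For any joint type $T(\bu,\bz)\subseteq\Lambda$, I would use the permutation invariance of both $P_U(\bu)$ and $Q^*(\bz)=\sum_{\by}P_X(\by)W_n^*(\bz|\by)$ to bound $P_{fp}(\Lambda,W_n^*)\ge|T(\bu,\bz)|\,P_U(\bu)\,Q^*(\bz)$. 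Substituting $|T(\bu,\bz)|\doteq e^{n\hat H_{\bu\bz}(U,Z)}$, the memoryless factorization $P_U(\bu)=\exp\{-n[\hat H_{\bu}(U)+\calD(\hat P_{\bu}\|P_U)]\}$, the estimate of $Q^*(\bz)$ already derived in~\eqref{Q_bound}, and the identity $\hat H_{\bu\bz}(U,Z)-\hat H_{\bu}(U)-\hat H_{\bz}(Z)=-\hat I_{\bu\bz}(U;Z)$, reproduces the inequality defining $\Lambda_{**}$ (up to the $\frac{|\calA|\ln(n+1)}{n}$ slack), so $(\bu,\bz)\in\Lambda_{**}$ for all $n$ large enough to absorb $\lambda'-\lambda$ into the polynomial slack, yielding $\Lambda_{**}^c\subseteq\Lambda^c$.

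For part~(i), my plan is a symmetrization argument. Given any $W_n\in\calW_n(D_a)$, form its group-average $\bar W_n(\bz|\by)\eqde\frac{1}{n!}\sum_{\pi}W_n(\pi(\bz)|\pi(\by))$, where $\pi$ ranges over the permutations of $\{1,\ldots,n\}$. By construction $\bar W_n$ is strongly exchangeable, and since $d_a$ is permutation invariant and the distortion constraint is preserved under convex combinations, $\bar W_n\in\calW_n^{ex}(D_a)$. A change of summation variables $(\bu,\by,\bz)\mapsto(\pi(\bu),\pi(\by),\pi(\bz))$ inside $P_{fp}(\Lambda_{**},W_n)$, combined with the memorylessness of $P_X$ and $P_U$ and the joint-type invariance of $\Lambda_{**}$, shows that $P_{fp}(\Lambda_{**},W_n)=P_{fp}(\Lambda_{**},\bar W_n)$. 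Invoking Lemma~\ref{Lambda_DW}(i) for $\bar W_n$ then gives the desired bound $e^{-n(\lambda-\delta_n)}$.

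The main obstacle I anticipate is carrying out the symmetrization identity in part~(i) cleanly: one has to permute $\bu$, $\by$, and $\bz$ simultaneously and, crucially, absorb the $\bu$-permutation into the outer sum over $\bu$, which is legitimate only because the watermark is random and $P_U$ is a product measure. This is precisely the point at which the individual-watermark model of Section~\ref{sec.Basic.Derivation} would break down, and it is exactly the motivation for the switch to the random-watermark formulation in this section.
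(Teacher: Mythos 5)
Your plan matches the paper's proof almost exactly: the paper also proves $P_{fp}(\Lambda,W_n^{\pi})=P_{fp}(\Lambda,W_n)$ via the same change of variables and uses the affine dependence of $P_{fp}$ on $W_n$ to conclude $P_{fp}(\Lambda,\bar W_n)=P_{fp}(\Lambda,W_n)$, and thence $\max_{\calW_n(D_a)}P_{fp}=\max_{\calW_n^{ex}(D_a)}P_{fp}$; and your observation that this averaging step is exactly where random watermarks become essential echoes the paper's own remark. One small imprecision in your part~(i): you cannot literally ``invoke Lemma~\ref{Lambda_DW}(i) for $\bar W_n$,'' since that lemma concerns the conditional-type region $\Lambda_*$, not the joint-type region $\Lambda_{**}$; you must instead rerun its proof with $T(\bz|\bu)$ replaced by $T(\bu,\bz)$ (exactly as you yourself propose for part~(ii), and as the paper signals with ``Using a similar proof of Lemma~\ref{Lambda_DW}''), so the fix is mechanical and the overall structure is sound.
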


To prove the above theorem in the case of general attack channels, we first
need to ensure that the probability of false positive under $\Lambda_{**}$ will
be smaller than $e^{-n \lambda}$ for any attack channel in$\calW_n(D_a)$.
We use an argument, which was used in \cite[Lemma~4]{SomekhMerhav03},
to prove that the worst strongly exchangeable attack channel
is as bad as the worst general channel, and therefore we can reuse the results of
Lemma~\ref{Lambda_DW}.
For the sake of completeness, we will rephrase the argument and adjust it to our
problem.

\begin{proof}
Given a general attack channel $W_n \in \calW_n(D_a)$, let $\pi$ denote a
permutation of $\{1,\ldots,n\}$ and let $W_n^{\pi}(\bz|\by)\eqde
W_n\big(\pi(\bz)|\pi(\by)\big)$. Clearly,
$$
\tilde{W}_n(\bz|\by)=\frac{1}{n!} \sum_{\pi} W_n^{\pi}(\bz|\by)
$$
is a strongly exchangeable channel. For a given $W_n \in \calW_n(D_a)$, let the
false-positive probability under $\Lambda$ be
\be
P_{fp}(\Lambda,W_n) &\eqde& \sum_{\bu} P_U(\bu) \sum_{\bz \in \Lambda(\bu)}
\left( \sum_{\by \in \calA^n} P_X(\by) W_n(\bz|\by) \right) \;,
\ee
where $\Lambda(\bu)=\big\{\bz\: : \: (\bz,\bu)\in \Lambda \big\}$. Recall that
any decision region $\Lambda$ is a union of joint type classes $\big\{T(\bu,\bz)\big\}$.
Since $P_{fp}(\Lambda,W_n)$ is affine in $W_n$, we can see that
\be
\label{arg1}
\frac{1}{n!} \sum_{\pi} P_{fp}(\Lambda,W_n^{\pi})&=& \frac{1}{n!} \sum_{\pi} \sum_{\bu} P_U(\bu) \sum_{\bz \in \Lambda(\bu)} \left( \sum_{\by} P_X(\by) W_n^{\pi}(\bz|\by) \right) \nonumber \\
&=& \sum_{\bu} P_U(\bu) \sum_{\bz \in \Lambda(\bu)} \left[ \sum_{\by} P_X(\by) \left(\frac{1}{n!} \sum_{\pi} W_n^{\pi}(\bz|\by) \right) \right] \nonumber \\
&=& P_{fp}\left(\Lambda,\frac{1}{n!} \sum_{\pi}W_n^{\pi} \right) \;.
\ee
Now, for a given permutation $\pi$,
\be
\label{arg2}
P_{fp}(\Lambda,W_n^{\pi})&=& \sum_{\bu} P_U(\bu) \sum_{\bz \in \Lambda(\bu)} \left(\sum_{\by} P_X(\by) W_n\big(\pi(\bz)|\pi(\by)\big) \right) \nonumber\\
&=& \sum_{\bu} P_U\big(\pi(\bu)\big) \sum_{\bz \in \Lambda(\pi(\bu))} \left( \sum_{\by} P_X\big(\pi(\by)\big) W_n\big(\pi(\bz)|\pi(\by)\big) \right) \nonumber\\
&=& \sum_{\bu} P_U(\bu) \sum_{\bz \in \Lambda(\bu)} \left( \sum_{\by} P_X(\by) W_n(\bz|\by) \right) \nonumber\\
&=& P_{fp}(\Lambda,W_n)
\ee
where the second equality follows since $\bz \in \Lambda(\bu) \Rightarrow
\pi(\bz) \in \Lambda(\pi(\bu))$ (and that is because $\Lambda$ is a union of
joint type classes $\{T(\bu,\bz)\}$) and the third equality follows from the fact that
$P_X$ and $P_U$ are memoryless which implies that
$P_X\big(\pi(\by)\big)=P_X(\by)$ and $P_U\big(\pi(\by)\big)=P_U(\by)$.

From \eqref{arg1} and \eqref{arg2}, we get that for any $\Lambda$,
\be
P_{fp}\left(\Lambda,\frac{1}{n!} \sum_{\pi}W_n^{\pi} \right) &=& \frac{1}{n!} \sum_{\pi} P_{fp}(\Lambda,W_n^{\pi}) \nonumber \\
&=& \frac{1}{n!} \sum_{\pi} P_{fp}(\Lambda,W_n) \nonumber \\
&=& P_{fp}(\Lambda,W_n) \;.
\ee
Therefore, for any $\Lambda$,
\be
\max_{W_n \in \calW_n(D_a)} P_{fp}(W_n,\Lambda) = \max_{W_n \in
\calW^{ex}_n(D_a)} P_{fp}(W_n,\Lambda) \;.
\ee
Hence, the worst general attack channel is not worse than the worst strongly exchangeable
channel, and therefore we can confine our search to the set of
strongly exchangeable channels under which $\Lambda_{**}$, defined in
\eqref{Lambda_RW}, is optimal. Using a similar proof of Lemma~\ref{Lambda_DW}, it
is easy to show that indeed under $\Lambda_{**}$ the false-positive probability
is not greater than $\exp\big\{-n (\lambda-\delta_n) \big\}$, where $\lim_{n \to \infty} \delta_n =0$.
\end{proof}

Note that the summation over $\bu$ (and the fact that any $\Lambda$ is a union
of types) enabled us the use of this argument, which might suggest that for
a deterministic watermark, a general attack channel is worse than
the worst strongly exchangeable channel. However, this channel might be dependent on the
watermark sequence which is not available to the attacker. This is exactly the reason why
random watermark setting is considered in the general attack scenario.

Once again, it is easy to verify that  $\Lambda_{**}$ does not violate the false-positive
probability constraint under general attack channel while minimizing the
false-negative probability.

We now proceed to find the optimal embedder. The false-negative probability for a given attack channel $W_n$, embedder $f_n$, and decision region $\Lambda$ can be written as follow
\be
P_{fn}\big(f_n, \Lambda, W_n\big) = \sum_{\bu \in \calB^n} P_U(\bu) P_{fn}\big(f_n, \Lambda(\bu), W_n\big),
\ee
where
\be
P_{fn}\big(f_n, \Lambda(\bu), W_n\big) = \sum_{\bz \in \Lambda^c(\bu)} \sum_{\by \in \calA^n} \left( \sum_{\bx : f_n(\bx,\bu)=\by} P_X(\bx) \right) W_n(\bz|\by) \;.
\ee

\begin{corollary}
For any attack channel $W_n \in \calW_n(D_a)$, the optimal embedder $f_n^{**}$ which minimizes the false-negative probability
is the embedder defined in (\ref{opt_emb1}).
\end{corollary}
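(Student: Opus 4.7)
The plan is to reduce the problem to the strongly exchangeable case already handled in Lemma~\ref{Th_opt_emb}, using the permutation-averaging argument just applied to the false-positive probability. The inclusion $\calW_n^{ex}(D_a)\subseteq\calW_n(D_a)$ trivially gives that the minimax value over general attacks is at least the one over strongly exchangeable attacks; the nontrivial step is to establish the reverse inequality for the particular embedder $f_n^*$ defined in \eqref{opt_emb1}--\eqref{opt_emb2}.

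First, I would observe that $f_n^*$ can be realized in a permutation-equivariant way: $f_n^*(\pi(\bx),\pi(\bu))=\pi\big(f_n^*(\bx,\bu)\big)$ for every permutation $\pi$ of $\{1,\ldots,n\}$. The reason is that the defining conditional empirical distribution $\hat{P}^*_{\bu\bx\by}(Y|X,U)$ depends on $(\bx,\bu)$ only through their joint type, and permuting $(\bx,\bu)$ permutes the associated conditional type class identically: $T^*\big(\by\,|\,\pi(\bx),\pi(\bu)\big)=\pi\big(T^*(\by|\bx,\bu)\big)$. The ``ties are resolved in an arbitrary fashion'' clause in Lemma~\ref{Th_opt_emb} then lets us choose a symmetric selection rule (e.g., a canonical choice obtained after sorting $(\bx,\bu)$ by a permutation-invariant key), securing the equivariance property.

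Next, I would repeat the averaging argument of \eqref{arg1}--\eqref{arg2} with $P_{fn}$ in place of $P_{fp}$. For any $W_n\in\calW_n(D_a)$, set $W_n^\pi(\bz|\by)\eqde W_n\big(\pi(\bz)|\pi(\by)\big)$ and $\tilde W_n=\frac{1}{n!}\sum_\pi W_n^\pi$; this $\tilde W_n$ lies in $\calW_n^{ex}(D_a)$ because strong exchangeability holds by construction and the distortion constraint is preserved via $d_a(\pi(\by),\pi(\bz))=d_a(\by,\bz)$. Using memorylessness of $P_X$ and $P_U$, the fact that $\Lambda_{**}$ is a union of joint types $T(\bu,\bz)$ (hence permutation-invariant in the sense that $(\bz,\bu)\in\Lambda_{**}\Leftrightarrow(\pi(\bz),\pi(\bu))\in\Lambda_{**}$), and the permutation-equivariance of $f_n^*$ from the previous step, the same change of variables as in \eqref{arg2} yields $P_{fn}(f_n^*,\Lambda_{**},W_n^\pi)=P_{fn}(f_n^*,\Lambda_{**},W_n)$ for every $\pi$. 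Affinity of $P_{fn}$ in $W_n$ then gives $P_{fn}(f_n^*,\Lambda_{**},\tilde W_n)=P_{fn}(f_n^*,\Lambda_{**},W_n)$, so that
\begin{equation*}
\max_{W_n\in\calW_n(D_a)}P_{fn}(f_n^*,\Lambda_{**},W_n)=\max_{W_n\in\calW_n^{ex}(D_a)}P_{fn}(f_n^*,\Lambda_{**},W_n).
\end{equation*}

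The corollary now follows from the chain
\begin{equation*}
\max_{W_n\in\calW_n(D_a)}P_{fn}(f_n,\Lambda_{**},W_n)\ge\max_{W_n\in\calW_n^{ex}(D_a)}P_{fn}(f_n,\Lambda_{**},W_n)\ge\max_{W_n\in\calW_n^{ex}(D_a)}P_{fn}(f_n^*,\Lambda_{**},W_n),
\end{equation*}
valid for every distortion-admissible $f_n$, whose right-hand side equals $\max_{W_n\in\calW_n(D_a)}P_{fn}(f_n^*,\Lambda_{**},W_n)$ by the previous display; the first inequality is the trivial class inclusion, and the second is Lemma~\ref{Th_opt_emb}. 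The main subtlety—and the reason the corollary is not immediate from Lemma~\ref{Th_opt_emb}—is the permutation-equivariance requirement on the embedder, which in turn is the structural reason why the random-watermark model must be adopted in this section: without the external average over $\bu$ (with the memoryless $P_U$), the change of variables in the analog of \eqref{arg2} would fail, and general attacks could strictly dominate strongly exchangeable ones by exploiting knowledge of the deterministic watermark.
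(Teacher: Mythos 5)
Your proof is correct and follows the paper's own sandwich route: class inclusion, Lemma~\ref{Th_opt_emb}, and the permutation-averaging identity $\max_{W_n\in\calW_n(D_a)}P_{fn}(f_n^*,\Lambda_{**},W_n)=\max_{W_n\in\calW_n^{ex}(D_a)}P_{fn}(f_n^*,\Lambda_{**},W_n)$, which the paper describes only as ``the above argument when applied to embedders which use a certain conditional type.'' You helpfully make explicit the permutation-equivariance property of $f_n^*$ needed for the change of variables in the $P_{fn}$ analog of \eqref{arg2}; the one small caution is that a sorting-based canonical tie-breaker is not guaranteed to be exactly equivariant when the stabilizer of the joint type of $(\bx,\bu)$ acts nontrivially on the chosen $\by$, whereas a uniformly randomized selection from $T^*(\by|\bx,\bu)$ gives equivariance in distribution without affecting the worst-case exponent.
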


\begin{proof}
Clearly, for any $\bu \in \calB^n$
\be
\min_{\substack{f_n(\bx,\bu): \\ d_e(\bx,\by)\le nD_e}} \max_{W_n \in \calW_n(D_a)} P_{fn}(f_n,\Lambda_{**}(\bu),W_n) 
& \ge& \min_{\substack{f_n(\bx,\bu): \\ d_e(\bx,\by)\le nD_e}} \max_{W_n \in \calW_n^{ex}(D_a)} P_{fn}(f_n,\Lambda_{**}(\bu),W_n) \nonumber \\
&=& \max_{W_n \in \calW_n^{ex}(D_a)} P_{fn}(f^{*}_n,\Lambda_{**}(\bu),W_n) \;,
\ee
but on the other hand
\be
\min_{\substack{f_n(\bx,\bu): \\ d_e(\bx,\by)\le nD_e}} \max_{W_n \in \calW_n(D_a)} P_{fn}(f_n,\Lambda_{**},W_n)
& \le & \max_{W_n \in \calW_n(D_a)} P_{fn}(f^*_n,\Lambda_{**},W_n) \nonumber \\
& = & \max_{W_n \in \calW^{ex}_n(D_a)} P_{fn}(f^*_n,\Lambda_{**},W_n) \;,
\ee
where the last equality can easily be obtained from the above argument when
applied to embedders which use a certain conditional type $T(\by|\bx,\bu)$ to
produce the stegotext (as $f_n^*$). Therefore, the optimal embedder in the case
of a general attack channel is $f^*_n$, proposed in Theorem~\ref{Th_opt_emb}.
\end{proof}

Note that from \eqref{opt_emb1}, \eqref{Lambda_RW} the false-negative error exponent can be expressed in a closed form using the method of types \cite{CK81}.

\subsection{Discussion}

In this section, we extended the basic setup presented
in Section~\ref{sec.Basic.Derivation} to the case of general attack channels.
First, we solved the problem for the case where the watermark sequence is
deterministic under strongly exchangeable channels. Then, we treated the case
of general attack channels, but, we had to assume that the watermark sequence
$\bu$ is random too. However, this should not surprise us. Clearly, for a given
watermark, the worst attack channel is dependent on the watermark (although it
is not known to the attacker). In this case, the attacker can imitate the
detector operation: first, it decides which hypothesis is more likely (using a
similar decision rule used by the detector). Then, it can try to ``push'' the
stegotext in the wrong direction causing a false detection. A similar behavior
can be seen in the case of a random watermark message $\bu$ and a deterministic
covertext sequence $\bx$. If $d_e=d_a$ and $D_a \ge D_e$, the worst channel
(which does depend on the covertext $\bx$) is the following: if $\by \neq \bx$
(hypothesis $H_1$) then $\bz=\bx$, i.e., the channel completely erases the
message, otherwise (hypothesis $H_0$) the channel tries to ``push'' $\by$ to
$\Lambda$. In this case, both the false-negative probability and the
false-positive probability might converge to one. The reason for that is rooted
in the fact that the set of attack channels has not been limited. In Subsection~\ref{subsec.strongly.exchageable}, we restricted the class of attack channels to be a
strongly exchangeable channel and got non-trivial results. Other limitations
may be imposed on the attack channels (e.g., blockwise memoryless, finite-state
channels) if meaningful results ought to be obtained.

Note that the worst attack strategy $W^*_n$ is independent of $\lambda$, the covertext distribution $P_X$, and even the embedder strategy and its distortion level $D_e$ (assuming that the embedder use a certain type $T(\by|\bx,\bu)$ to produce the stegotext). The attack strategy is only dependent on the allowable distortion level $D_a$.  Therefore, the embedding strategy can be designed assuming that the worst attack channel is present. This can be useful in evaluating the performance (in terms of false-negative probability) of suboptimal embedders.

\renewcommand{\theequation}{A-\arabic{equation}} 
\setcounter{equation}{0}  

\section*{Appendix}  

\begin{proof}[Proof of Theorem~\ref{th1}]
First, we explore the case where $a=0$, i.e., $\by=b\bu$. Substituting $a=0$ in
the constraint of eq.~\eqref{opt1}, we get that $b^2-2 \rho b+(\alpha^2-D_e) \le
0$. The fact that $b$ is a real number implies that the discriminant of $(b^2-2
\rho b+(\alpha^2-D_e))$ is non-negative which leads to $\rho^2-(\alpha^2-D_e) \ge
0$, or $ D \ge \alpha^2-\rho^2$. This corresponds to the case where the
stegotext includes \emph{only} a fraction of $\bu$ without violating the
distortion constraint. In this case, the false-negative probability is zero
(the distortion constraint is so loose, it allows to ``erase'' the covertext).
In the following case, we can choose $b^*=\rho+\sqrt{\rho^2-\alpha^2+D}$ as the
optimal solution. From now on, we assume that $D_e < \alpha^2-\rho^2$ which means
that $a=0$ is not a legitimate solution. Let us assume that $\rho \ge 0$.
Define $t \eqde b/a$, and rewrite \eqref{opt1} by dividing the numerator and
denominator by $a^2$:
\be
& & \max_{t \in \mathbb{R}} f(t) \nonumber\\
\textrm{subject to:} & &  a^2 t^2+2(a-1)a\rho t+(a-1)^2 \alpha^2  \le D
\ee
where
$$
f(t)=\frac{(t+\rho)^2}{(t+\rho)^2+(\alpha^2-\rho^2)} \;.
$$
It is easy to show that maximizing $f(t)$ is equivalent to maximizing $t$. Since $t$
is a real number, the discriminant of $\big[a^2 t^2+2(a-1)a\rho t+(a-1)^2
\alpha^2-D\big]$ must be non-negative, i.e.,
\be
\Delta=4a^2 \left[D-(a-1)^2(\alpha^2-\rho^2)\right] \ge 0 \;\;,
\ee
which leads to
\be
1-\sqrt{\frac{D_e}{\alpha^2-\rho^2}} \le a \le 1+\sqrt{\frac{D_e}{\alpha^2-\rho^2}}.
\ee
Hence, $a$ must be in the range $R \eqde \left[\: 1-\sqrt{\frac{D_e}{\alpha^2-\rho^2}}
,1+\sqrt{\frac{D_e}{\alpha^2-\rho^2}}\: \right]$.
Let us rewrite the constraint as follows,
\be
\left[at+(a-1)\rho \right]^2+(a-1)^2(\alpha^2-\rho^2)-D \le 0 \;\;,
\ee
consequently,
\be
\frac{(1-a)\rho - \sqrt{D_e-(a-1)^2 (\alpha^2-\rho^2)}}{a} \le t \le \frac{(1-a)\rho +
\sqrt{D_e-(a-1)^2 (\alpha^2-\rho^2)}}{a} \;\;.
\ee
Our next step will be to maximize the upper bound on $t$ in the allowable range of
$a$.
\be
\arg \max_{a \in R} \: t(a)
\ee
where
\be
t(a) = \frac{(1-a)\rho + \sqrt{D_e-(a-1)^2 (\alpha^2-\rho^2)}}{a} \;.
\ee
After differentiating with respect to $a$ and equating to zero, we get
\be
a_{1,2}= \frac { (\alpha^2-\rho^2)(\alpha^2-D_e) \pm  \sqrt { D {\rho}^{2}}
\sqrt{(\alpha^2-\rho^2)(\alpha^2-D_e)}}{\alpha^2(\alpha^2-\rho^2)} \;.
\ee
Accordingly, the optimal value of $a$ and $b$ are
\be
\label{otp_emb}
(a^*,b^*)=\left(\arg \max \left\{t(a) | a \in \{a_1,a_2,a_3,a_4\} \bigcap R
\right\},a^* \cdot t(a^*) \right) \;\;,
\ee
where $a_{3,4}=1 \pm \sqrt{\frac{D_e}{\alpha^2-\rho^2}}$. The same results are
obtained in the case where $\rho <0$.
\end{proof}

\begin{proof}[Proof of Theorem~\ref{th2}]
It is easy to show that under $H_1$
\be
\label{rho}
\hat{\rho}^2_{\bu \by}=
\frac{\left(|\rho|+\sqrt{D_e}\right)^2}{\left(|\rho|+\sqrt{D_e}\right)^2+(\alpha^2-\rho^2)}
\; ,
\ee
where $\alpha^2$ and $\rho$ are functions of the random vector $\bX$. By
conditioning on $\alpha^2$, we can express the false-negative probability as
\be
P_{fn}= \int_0^{\infty}  \Pr \left\{ \hat{\rho}^2_{\bu \by} \le 1-e^{-2\lambda}
\; \Big|\; H_1, \alpha^2=r \right\} \cdot p_{\alpha^2}(r) dr,
\ee
where $(n\alpha^2/\sigma^2)$ is $\chi^2$ distributed with $n$ degrees of freedom and
the probability density function for the $\chi^2$ distribution with $n$ degrees of
freedom is given by
$$
p_{\chi^2_n}(z)= \frac{(1/2)^{n/2}}{\Gamma(n/2)}{z}^{n/2-1} e^{-n/2}, \qquad z
\ge 0 \quad ,
$$
and $\Gamma(\cdot)$ denotes the Gamma function.
Now, given $\alpha^2$, $D_e$ and a threshold value $\tau \eqde 1-e^{-2\lambda}$,
let us find the range of $\rho$ for which $\hat{\rho}^2_{\bu \by} \le \tau$, i.e.,
\be
\label{equation}
\hat{\rho}^2_{\bu \by}(\rho) \eqde \frac{(|\rho|+\sqrt{D_e})^2}{(|\rho|+\sqrt{D_e})^2+(\alpha^2-\rho^2)} \le \tau \;\;.
\ee
The function $\hat{\rho}^2_{\bu \by}(\rho)$ is symmetric with respect to the
$\rho$ axis, monotonically increasing in $|\rho|$ and attains its minimum value
$\frac{D_e}{D_e+\alpha^2} $ at $\rho=0$. Hence, for $\alpha^2 <
\frac{D_e(1-\tau)}{\tau}$, $\hat{\rho}_{{\bu \by}}^2$ is greater than $\tau$.
After solving \eqref{equation} with respect to $\rho$ and using the fact that $\tau
\le 1$, we get that $|\hat{\rho}_{\bu \by}| \le \sqrt{\tau}$ implies that
$|\rho| \le \sqrt{D_e} (\tau-1)+\sqrt{D_e \tau^2+\tau \alpha^2-\tau D}$ as long as
$\alpha^2 \ge \frac{D_e(1-\tau)}{\tau}$. Define
\be
\Theta(r) \eqde \arccos \left[\frac{\sqrt{D_e}(\tau-1)+\sqrt{D_e \tau^2+\tau r -\tau
D}}{\sqrt{r}} \right]
\ee
It follows that
\ben
\Pr \left\{ \hat{\rho}^2_{\bu \by} \le \tau \; \Big|\; H_1, \alpha^2=r \right\}
&=& \Pr \left\{ \rho^2 \le \left[\sqrt{D_e} (\tau-1)+\sqrt{D_e \tau^2+\tau
\alpha^2-\tau D}\right]^2\; \Big|\; H_1, \alpha^2=r \right\} \\
 &=& 1-\Pr \left\{ \rho^2 > \left[\sqrt{D_e} (\tau-1)+\sqrt{D_e \tau^2+\tau
\alpha^2-\tau D}\right]^2\; \Big|\; H_1, \alpha^2=r \right\} \\
&=& 1-2\frac{A_n\big(\Theta(r)\big)}{A_n(\pi)} \;\;,
\een
where
$$
\frac{A_n\big(\Theta(r)\big)}{A_n(\pi)} \doteq e^{n \ln \sin\big(\Theta(r)\big)} \;.
$$
We note that $\Pr \left\{ \hat{\rho}^2_{\bu \by} \le \tau  \big| H_1,
\alpha^2\right\}=0$ for $\alpha^2$ in the range
$\left[0,\frac{D_e(1-\tau)}{\tau}\right]$. Therefore,
\be
\label{int_se}
P^{(n)}_{fn} & = & \frac{(1/2)^{n/2}}{\Gamma(n/2)}
\int_{\frac{D_e(1-\tau)}{\tau}}^\infty
\left[1- e^{n \ln {\sin\big(\Theta(r)\big)}}\right] e^{-\frac{nr}{2 \sigma^2}} \Bigg(\frac{nr}{\sigma^2} \Bigg)^{\frac{n-2}{2}}dr \nonumber\\
& = & \frac{(1/2)^{\frac{n}{2}} n^{\frac{n-2}{2}}}{\Gamma(n/2)}
\left[\int_{\frac{D_e(1-\tau)}{\tau}}^\infty \frac{\sigma^2}{r} e^{-\frac{nr}{2
\sigma^2}} e^{\frac{n}{2} \ln(r/\sigma^2)} dr - \int_{\frac{D_e(1-\tau)}{\tau}}^\infty
e^{n \ln {\sin\Theta(r)}}e^{-\frac{nr}{2 \sigma^2}} e^{\frac{n}{2} \ln(r/\sigma^2)}
dr \right] \;. \nonumber\\
\ee
Our next step is to evaluate the exponential decay rate of \eqref{int_se}. It
is easy to see that the first integral of \eqref{int_se} has a slower
exponential decay rate and therefore dictates the overall decay rate. To
evaluate the exponential decay rate of $P^{(n)}_{fn}$ as $n \to \infty$ we use
Laplace's method for integrals
\footnote{ \label{revD8}
Laplace's method is a general
technique for obtaining the asymptotic behavior of integrals of the form $I(x) = \int_a^b f(t)
e^{x \Phi(t)} dt$ as $x \to \infty$.
In this case $c \in [a,b]$, the maximum of $\Phi(t)$ in the
interval $[a,b]$, dictates the asymptotic behavior of the integral (assuming
that $f(c) \neq 0$), or in the above case:
$$
\lim_{n \to \infty} -\frac{1}{n} \ln \left[ \int_a^b f(t) e^{-n \Phi(t)} dt
\right] = \min_{t \in [a,b]} \Phi(t) \;\;.
$$
See \cite[Sec.~6.4]{BenderOrszag78}, \cite[Ch.4]{DeBruijn70} for more information.}.
Therefore, we need to find the slowest exponential decay rate of the integrant in the limits
of the integral.
It is easy to show that
\be
\lim_{n \to \infty} \frac{1}{n} \ln \left[\frac{(1/2)^{\frac{n}{2}}
n^{\frac{n-2}{2}}}{\Gamma(n/2)}\right] = \frac{1}{2} \;\;,
\ee
and therefore the overall exponent is given by
\be
\label{E_se_min}
E_{fn}^{se}(\tau,D_e)= \min_{r \ge \frac{D_e(1-\tau)}{\tau}}
\frac{1}{2}\Bigg[\frac{r}{\sigma^2}-\ln(r/\sigma^2)-1\Bigg].
\ee
\label{revD9}
The function $g(r) = \big[r/\sigma^2-\ln(r/\sigma^2)-1 \big], \; r \in
(0,\infty)$ achieves its minimum at $r=\sigma^2$ and $g(\sigma^2)=0$.
Therefore, in the case where $\frac{D_e(1-\tau)}{\tau} \le \sigma^2$,
$E_{fn}^{se}(\tau,D_e)=0$. Otherwise,
the minimum of \eqref{E_se_min} is obtained at $r=\frac{D_e(1-\tau)}{\tau}$.
Hence, the false-negative exponent of the sign embedder is given by
\be
E_{fn}^{se}(\tau,D_e)=\left\{ \begin{array}{lll} 0 & , & \frac{D_e(1-\tau)}{\tau} \le \sigma^2 \\
                          \frac{1}{2}\left[\frac{D_e(1-\tau)}{\tau \sigma^2}-\ln\left(\frac{D_e(1-\tau)}{\tau \sigma^2}\right)-1 \right] & , & \textrm{else} \end{array}  \right.
\ee
Setting $\tau=1-e^{-2\lambda}$ achieves \eqref{fn_exp_se}.

\end{proof}

\begin{proof}[Proof of Corollary~\ref{cor1}]
Since the false-negative probability of the improved embedder \eqref{sign_improved}
is zero for $\alpha^2 \le D_e$ we can rewrite the integral \eqref{int_se} for the case
where $\frac{1-\tau}{\tau} \le 1$ (or $\lambda \ge 1/2 \ln 2$) where the lower limit
equals to $D_e$ (and does not depend on $\lambda$) as following:
\be
\label{int_imp_se}
P^{(n)}_{fn} & = & \frac{(1/2)^{n/2}}{\Gamma(n/2)} \int_{D_e}^\infty \left[1- e^{n \ln
{\sin\big(\Theta(r)\big)}}\right] e^{-\frac{nr}{2 \sigma^2}}
\Bigg(\frac{nr}{\sigma^2} \Bigg)^{\frac{n-2}{2}}dr \;\;.
\ee
Optimizing using Laplace method as done in the proof of Theorem~\ref{th2} leads to
\eqref{imp_sign_E}.
\end{proof}

\begin{proof}[Proof of Theorem~\ref{th3}]
Given $\lambda>0$, the false-negative probability is given by
\be
P_{fn}= \Pr \left\{\hrho_{\bu \by} \le  \sqrt{1-e^{-2\lambda}} \big| H_1
\right\} \;,
\ee
where the normalized correlation, under $H_1$, is given by
\be
\hrho_{\bu \by} = \frac{\rho+\sqrt{D_e}}{\sqrt{\alpha^2+2\sqrt{D_e}\rho+D}} &<& T
\;.
\ee
The function $\hrho_{\bu \by}(\rho)$ achieves its minimum at
$\rho=-\frac{\alpha^2}{\sqrt{D_e}}$. Since $\rho \in [-\alpha,\alpha]$ we
conclude that in the case where $\alpha^2 \ge D_e$, $\hrho_{\bu \by} < T$ implies
that $\rho < \sqrt{D_e}(T^2-1)+T \sqrt{\alpha^2-D(1-T^2)}$ ($\hrho_{\bu
\by}(\rho)$ is monotonically increasing in $\rho$, and $\hrho_{\bu
\by}(-\alpha)=-1$). If $(1-T^2)D \le \alpha^2 < D_e$, $\hrho_{\bu \by} < T$
implies that
$$
\sqrt{D_e}(T^2-1)-T \sqrt{\alpha^2-D(1-T^2)}  \le \rho \le
\sqrt{D_e}(T^2-1)+T \sqrt{\alpha^2-D(1-T^2)} \;\;.
$$
Otherwise, for $\alpha^2 < (1-T^2)D_e$,
$\hrho_{\bu \by} \ge T$ for all $\rho \in [-\alpha,\alpha]$. Define
\be
\Psi_1(r) &\eqde& \arccos \left[\frac{\sqrt{D_e}(T^2-1)+T \sqrt{r-D(1-T^2)}}{\sqrt{r}}\right] \;\;,\\
\Psi_2(r) &\eqde& \arccos \left[\frac{\sqrt{D_e}(T^2-1)-T
\sqrt{r-D(1-T^2)}}{\sqrt{r}}\right] \;\;.
\ee
We need to pay attention to the point $r_0=\frac{D_e(1-T^2)}{T^2}$ in which
$\Psi_1(r_0)=\pi/2$. Beyond that point ($r>r_0$), the probability of false-negative
given $\alpha^2=r$ goes to one as $n$ tends to infinity.
Therefore, the false-negative probability can be written as follows: In the case
where $\frac{1-T^2}{T^2}>1$ (or $\lambda < \frac{1}{2} \ln (2)$)
\be
\begin{split}
\label{int1}
P_{fn}^{(n)} =  \frac{(1/2)^{\frac{n}{2}} n^{\frac{n-2}{2}}}{\Gamma(n/2)}
\Bigg[\int_{D_e(1-T^2)}^{D_e} \frac{\sigma^2}{r} \left(e^{n \ln
{\sin\big(\Psi_1(r)\big)}}-e^{n \ln {\sin\big(\Psi_2(r)\big)}}\right)
        e^{- \frac{nr}{2\sigma^2}} e^{\frac{n}{2} \ln(r/\sigma^2)}dr \\
    + \int_{D_e}^{\frac{D_e(1-T^2)}{T^2}}
      \frac{\sigma^2}{r} e^{n \ln {\sin\big(\Psi_1(r)\big)}} e^{- \frac{nr}{2\sigma^2}} e^{\frac{n}{2} \ln(r/\sigma^2)}dr\\
    + \int^{\infty}_{\frac{D_e(1-T^2)}{T^2}}
        \frac{\sigma^2}{r}\left(1- e^{n \ln {\sin\big(\Psi_1(r)\big)}}\right) e^{- \frac{nr}{2\sigma^2}} e^{\frac{n}{2} \ln(r/\sigma^2)} dr
        \Bigg] \;\;.
\end{split}
\ee
The first integral in \eqref{int1} represents the false-negative probability when
both $\Psi_1(r)$ and $\Psi_2(r)$ are greater than $\pi/2$. In this case, we need to
subtract the areas of two caps, i.e.,
$\frac{A_n(\pi-\Psi_1(r))-A_n(\pi-\Psi_2(r))}{A_n(\pi)}$. The second integral in
\eqref{int1} stems from the fact that for $r \ge D_e$ the false-negative probability
(given $\alpha^2=r$) equals to $\frac{A_n(\pi-\Psi_1(r))}{A(\pi)}$. The last
integral in \eqref{int1} stems from the fact that the false-negative probability
(given $\alpha^2=r$) equals to $1-\frac{A(\Psi_1(r))}{A(\pi)}$. In a similar way, in
the case where $\frac{1-T^2}{T^2} \le 1$ (or $\lambda \ge \frac{1}{2} \ln (2)$)
\be
\begin{split}
P_{fn}^{(n)} =  \frac{(1/2)^{\frac{n}{2}} n^{\frac{n-2}{2}}}{\Gamma(n/2)}
\Bigg[\int_{D_e(1-T^2)}^{\frac{D_e(1-T^2)}{T^2}} \frac{\sigma^2}{r} \left(e^{n \ln
{\sin\big(\Psi_1(r)\big)}}-e^{n \ln {\sin\big(\Psi_2(r)\big)}}\right)
        e^{- \frac{nr}{2\sigma^2}} e^{\frac{n}{2} \ln(r/\sigma^2)}dr \\
    + \int_{\frac{D_e(1-T^2)}{T^2}}^{D_e}
      \frac{\sigma^2}{r} \left(1-e^{n \ln {\sin\big(\Psi_1(r)\big)}}-e^{n \ln {\sin\big(\Psi_2(r)\big)}}\right)
        e^{- \frac{nr}{2\sigma^2}} e^{\frac{n}{2} \ln(r/\sigma^2)}dr \\
    + \int^{\infty}_{D_e}
        \frac{\sigma^2}{r} \left(1- e^{n \ln {\sin\big(\Psi_1(r)\big)}}\right) e^{- \frac{nr}{2\sigma^2}} e^{\frac{n}{2} \ln(r/\sigma^2)} dr
        \Bigg] \;\;.
\end{split}
\ee
Since we are interested in the exponential decay rate (to the first order), the
slowest exponent dictates the overall exponential behavior. Therefore, the fact that
$\sin\big(\Psi_1(r)\big) > \sin\big(\Psi_2(r)\big)$ for $D_e(1-T^2) \le r \le
D(1-T^2)/T^2 $ implies that
\be
\begin{split} P_{fn} \doteq  \frac{(1/2)^{\frac{n}{2}} n^{\frac{n-2}{2}}}{\Gamma(n/2)}
\Bigg[\int_{D_e(1-T^2)}^{\frac{D_e(1-T^2)}{T^2}} \frac{\sigma^2}{r} e^{n \ln
{\sin\big(\Psi_1(r)\big)}}
        e^{- \frac{nr}{2\sigma^2}} e^{\frac{n}{2} \ln(r/\sigma^2)}dr \\
    + \int^{\infty}_{\frac{D_e(1-T^2)}{T^2}}
        \frac{\sigma^2}{r} e^{- \frac{nr}{2\sigma^2}} e^{\frac{n}{2} \ln(r/\sigma^2)} dr
        \Bigg] \;.
\end{split}
\ee
Again, using the Laplace's method for integrals \cite[Ch.4]{DeBruijn70} we can
conclude that
\be
E_{fn}^{ae}(T,D_e)= \min \Big\{E_1(T,D_e),E_2(T,D_e) \Big\} \;,
\ee
where,
\be
E_1(T,D_e)& = &\min_{D_e(1-T^2) < r \le \frac{D_e(1-T^2)}{T^2}} \frac{1}{2} \Bigg[\frac{r}{\sigma^2}-\ln\left(\frac{r}{\sigma^2}\right)-2\ln \sin \big(\Psi_1(r)\big)-1 \Bigg]\;, \\
E_2(T,D_e)& = &\min_{r > \frac{D_e(1-T^2)}{T^2}} \frac{1}{2}
\Bigg[\frac{r}{\sigma^2}-\ln\left(\frac{r}{\sigma^2}\right)-1 \Bigg] \;.
\ee
$E_2(T,D_e)$ is given by
\be
E_2(T,D_e)=\left\{ \begin{array}{lll} 0 & , & \frac{D_e(1-T^2)}{T^2} \le \sigma^2 \\
                          \frac{1}{2}\left[\frac{D_e(1-T^2)}{T^2 \sigma^2}-\ln\left(\frac{D_e(1-T^2)}{T^2 \sigma^2}\right)-1 \right] & , & \textrm{else} \end{array}  \right. \;\;.
\ee
Since $T^2=1-e^{-2 \lambda}$, then $E_2(\lambda,D_e)=E_{fn}^{se}(\lambda,D_e)$ and
therefore $E_{fn}^{ae}(\lambda,D_e) \le E_{fn}^{se}(\lambda,D_e)$. Our next step will be
to prove that $E_1(T,D_e) < E_2(T,D_e)$ when $\frac{D_e(1-T^2)}{T^2}>\sigma^2$ (otherwise,
$E_{fn}^{ae}(T,D_e)=0$). Define
\be
f(r)=\frac{r}{2\sigma^2}-\frac{1}{2}\ln\left(\frac{r}{\sigma^2}\right)-\ln \sin
\big(\Psi_1(r)\big)-\frac{1}{2}
\ee
$f(r)$ is a continuous, non-negative function in the range $D_e(1-T^2) < r \le
\frac{D_e(1-T^2)}{T^2}$. Clearly,
\be
E_1(T,D_e) \le f\left( \frac{D_e(1-T^2)}{T^2} \right) = E_2(T,D_e).
\ee
In addition, $f'(r)$ is continuous in the above range. It can easily be shown that
\be
f'\left(\frac{D_e(1-T^2)}{T^2} \right)= \frac{1}{2}\left[1-\frac{T^2
\sigma^2}{D_e(1-T^2)} \right] > 0
\ee
hence, $f(r)$ is monotonically increasing in small neighborhood of
$\frac{D_e(1-T^2)}{T^2}$, and therefore $E_1(T,D_e) < E_2(T,D_e)$. This fact leads to the
conclusion that $E_{fn}^{ae}(\lambda,D_e) < E_{fn}^{se}(\lambda,D_e)$. The exact value
of $E_1(T,D_e)$ is cumbersome and therefore will not be presented.
\end{proof}

\bibliographystyle{IEEEtran}
\bibliography{IEEEabrv,../watermarking}

\begin{thebibliography}{10}
\providecommand{\url}[1]{#1}
\csname url@rmstyle\endcsname
\providecommand{\newblock}{\relax}
\providecommand{\bibinfo}[2]{#2}
\providecommand\BIBentrySTDinterwordspacing{\spaceskip=0pt\relax}
\providecommand\BIBentryALTinterwordstretchfactor{4}
\providecommand\BIBentryALTinterwordspacing{\spaceskip=\fontdimen2\font plus
\BIBentryALTinterwordstretchfactor\fontdimen3\font minus
  \fontdimen4\font\relax}
\providecommand\BIBforeignlanguage[2]{{%
\expandafter\ifx\csname l@#1\endcsname\relax
\typeout{** WARNING: IEEEtran.bst: No hyphenation pattern has been}%
\typeout{** loaded for the language `#1'. Using the pattern for}%
\typeout{** the default language instead.}%
\else
\language=\csname l@#1\endcsname
\fi
#2}}

\bibitem{AndersonPetitcolas98}
R.~Anderson and F.~Petitcolas, ``On the limits of stenography,'' \emph{{IEEE}
  J. Select. Areas Commun.}, vol.~16, no.~4, pp. 474--481, May 1998.

\bibitem{PAK99}
F.~Petitcolas, R.~Anderson, and M.~Kuhn, ``Information hiding -- a survey,''
  \emph{Proc. {IEEE}}, vol.~87, no.~7, pp. 1062--1078, July 1999.

\bibitem{CoxMiller99}
I.~J. Cox, M.~L. Miller, and A.~L. McKellips, ``Watermarking as communications
  with side information,'' \emph{Proc. {IEEE}}, vol.~87, no.~7, pp. 1127--1141,
  July 1999.

\bibitem{MoulinOs03}
P.~Moulin and J.~O'Sullivan, ``Information-theoretic analysis of information
  hiding,'' \emph{{IEEE} Trans. Inform. Theory}, vol.~49, no.~3, pp. 563--593,
  Mar. 2003.

\bibitem{Merhav00b}
N.~Merhav, ``Universal detection of messages via finite--state channels,''
  \emph{{IEEE} Trans. Inform. Theory}, vol.~46, no.~6, pp. 2242--2246, Sept.
  2000.

\bibitem{LiuMoulin03}
T.~Liu and P.~Moulin, ``Error exponents for watermarking game with
  squared-error constraints,'' in \emph{Proceedings of International Symposium
  on Information Theory, (ISIT '03)}, Yokohama, Japan, July 2003.

\bibitem{LiuMoulin03b}
------, ``Error exponents for one-bit watermarking,'' in \emph{Proceedings of
  Int. Conf. on Acoustics, Speech, and Signal Processing (ICASSP '03)}, vol.~3,
  Apr. 2003, pp. 65--68.

\bibitem{Merhav05Wacha}
\BIBentryALTinterwordspacing
N.~Merhav, ``An information-theoretic view of watermark embedding-detection and
  geometric attacks,'' June 2005, presented at WaCha `05, Barcelona, Spain.
  [Online]. Available:
  \url{http://www.ee.technion.ac.il/people/merhav/papers/p98.pdf}
\BIBentrySTDinterwordspacing

\bibitem{HartungKutter99}
F.~Hartung and M.~Kutter, ``Multimedia watermarking techniques,'' \emph{Proc.
  {IEEE}}, vol.~87, no.~7, pp. 1079--1107, July 1999.

\bibitem{LinnartzKalker98}
J.~Linnartz, T.~Kalker, and G.~Depovere, ``Modelling the false alarm and missed
  detection rate for electronic watermarks,'' in \emph{Information Hiding:
  Second International Workshop, IH'98}, Portland, Oregon, USA, Apr. 1998, p.
  329.

\bibitem{MillerBloom99}
M.~L. Miller and J.~A. Bloom, ``Computing the probability of false watermark
  detection,'' in \emph{IH '99: Proceedings of the Third International Workshop
  on Information Hiding}.\hskip 1em plus 0.5em minus 0.4em\relax London, UK:
  Springer-Verlag, 2000, pp. 146--158.

\bibitem{MillerCoxBloom00}
M.~L. Miller, I.~J. Cox, and J.~A. Bloom, ``Informed embedding: Exploiting
  image and detector information during watermark insertion,'' in
  \emph{International Conference on Image Processing Processing (ICIP '00)},
  vol.~3, 2000, pp. 1--4.

\bibitem{PodilchukDelp01}
C.~Podilchuk and E.~Delp, ``Digital watermarking: Algorithms and
  applications,'' \emph{{IEEE} Signal Processing Mag.}, vol.~18, no.~4, pp.
  33--46, July 2001.

\bibitem{BarniBartolini04}
M.~Barni and F.~Bartolini, \emph{Watermarking Systems Engineering: Enabling
  Digital Assets Security and Other Applications}.\hskip 1em plus 0.5em minus
  0.4em\relax Marcel Dekker, 2004.

\bibitem{HartungSuGirod99}
F.~Hartung, J.~Su, and B.~Girod, ``Spread spectrum watermarking: Malicious
  attacks and counter-attacks,'' in \emph{Proceedings of SPIE Vol. 3657,
  Security and Watermarking of Multimedia Contents}, San Jose, CA, Jan. 1999,
  pp. 147--158.

\bibitem{MalvarFlorencio03}
H.~S. Malvar and D.~A.~F. Flor{\^e}ncio, ``Improved spread spectrum: a new
  modulation technique for robust watermarking,'' \emph{{IEEE} Trans. Signal
  Processing}, vol.~51, no.~4, pp. 898–--905, Apr. 2003.

\bibitem{Furon06}
T.~Furon, ``A constructive and unifying framework for zero-bit watermarking,''
  \emph{submitted to IEEE Trans. Information Forensics and Security}, 2006.

\bibitem{HernandezGonzalez99}
J.~Hernandez and F.~Perez-Gonzalez, ``Statistical analysis of watermarking
  schemes for copyright protection of images,'' \emph{Proc. {IEEE}}, vol.~87,
  no.~7, pp. 1142--1166, July 1999.

\bibitem{Gutman89}
M.~Gutman, ``Asymptotically optimal classification for multiple tests
  withempirically observed statistics,'' \emph{{IEEE} Trans. Inform. Theory},
  vol.~35, no.~2, pp. 401--408, Mar. 1989.

\bibitem{MGZ89}
N.~Merhav, M.~Gutman, and J.~Ziv, ``On the estimation of the order of a markov
  chain and universal datacompression,'' \emph{{IEEE} Trans. Inform. Theory},
  vol.~35, no.~5, pp. 1014--1019, Sept. 1989.

\bibitem{ZivMerhav92}
------, ``Estimating the number of states of a finite-state source,''
  \emph{{IEEE} Trans. Inform. Theory}, vol.~38, no.~1, pp. 61--65, Jan. 1992.

\bibitem{SomekhMerhav03}
A.~Somekh-Baruch and N.~Merhav, ``On the error exponent and capacity games of
  private watermarking systems,'' \emph{{IEEE} Trans. Inform. Theory}, vol.~49,
  no.~3, pp. 537--562, Mar. 2003.

\bibitem{SomekhMerhav04}
------, ``On the capacity game of public watermarking systems,'' \emph{{IEEE}
  Trans. Inform. Theory}, vol.~50, no.~3, pp. 511--524, Mar. 2004.

\bibitem{SabbagMerhav06}
E.~Sabbag and N.~Merhav, ``Optimal watermark embedding and detection strategies
  under limited detection resources,'' in \emph{Proc. Int. Symp. on Information
  Theory (ISIT'06)}, Seattle, USA, 2006, pp. 173--177.

\bibitem{SabbagMerhav07}
------, ``Optimal watermark embedding and detection strategies under general
  worst case attacks,'' \emph{Accepted to Int. Symp. on Information Theory
  (ISIT'07)}, June 2007.

\bibitem{VanTrees68}
H.~L. Van-Trees, \emph{Detection, Estimation and Modulation Theory-Volume
  I}.\hskip 1em plus 0.5em minus 0.4em\relax New-York: John Wiely {\&} Sons,
  1968.

\bibitem{CK81}
I.~Csisz\'ar and J.~K{\"o}rner, \emph{Information Theory: Coding Theorems for
  Discrete Memoryless Systems}.\hskip 1em plus 0.5em minus 0.4em\relax Academic
  Press, 1981.

\bibitem{Barni05}
M.~Barni, ``Effectiveness of exhaustive search and template matching against
  watermark desynchronization,'' \emph{{IEEE} Signal Processing Lett.},
  vol.~12, no.~2, pp. 158--161, Feb. 2005.

\bibitem{Merhav89}
N.~Merhav, ``On the estimation of the model order in exponential families,''
  \emph{{IEEE} Trans. Inform. Theory}, vol.~35, no.~5, pp. 1109--1114, Sept.
  1989.

\bibitem{Merhav93}
------, ``Universal decoding for memoryless {G}aussian channels with a
  deterministic interference,'' \emph{{IEEE} Trans. Inform. Theory}, vol.~39,
  no.~4, pp. 1261--1269, July 1993.

\bibitem{MKLS94}
N.~Merhav, G.~Kaplan, A.~Lapidoth, and S.~{Shamai (Shitz)}, ``On information
  rates for mismatched decoders,'' \emph{{IEEE} Trans. Inform. Theory},
  vol.~40, no.~6, pp. 1953--1967, Nov. 1994.

\bibitem{ChenWornell01}
B.~Chen and G.~Wornell, ``Quantization index modulation: A class of provably
  good methods for digital watermarking and information embedding,''
  \emph{{IEEE} Trans. Inform. Theory}, vol.~47, no.~4, pp. 1423--1443, May
  2001.

\bibitem{Wyner73}
A.~D. Wyner, ``A bound on the number of distinguishable functions which are
  time-limited and approximately band-limited,'' \emph{SIAM Journal on Applied
  Mathematics}, vol.~24, no.~3, pp. 289--297, May 1973.

\bibitem{BenderOrszag78}
C.~M. Bender and S.~A. Orszag, \emph{Advanced Mathematical Methods for
  Scientists and Engineers}.\hskip 1em plus 0.5em minus 0.4em\relax New York:
  McGraw-Hill, 1978.

\bibitem{DeBruijn70}
N.~G. de~Bruijn, \emph{Asymptotic Methods in Analysis}, 3rd~ed.\hskip 1em plus
  0.5em minus 0.4em\relax North-Holland publishing company, 1970.

\end{thebibliography}
\end{document}